\newtheorem{Thm}{Theorem}
\newtheorem{Lem}{Lemma}
\newtheorem{Prop}{Proposition}
\newtheorem{Def}{Definition}
\newtheorem{Asum}{Assumption}
\newtheorem{Rmk}{Remark}
\def\BibTeX{{\rm B\kern-.05em{\sc i\kern-.025em b}\kern-.08em
    T\kern-.1667em\lower.7ex\hbox{E}\kern-.125emX}}
\begin{document}
\title{A Tunable Universal Formula for Safety-Critical Control}
\author{Ming Li, Zhiyong Sun, Patrick J. W. Koelewijn, and Siep Weiland
\thanks{Z. Sun was
supported in part by ``The Fundamental Research Funds for the Central Universities, Peking University''.   \emph{(Corresponding author: Zhiyong~Sun.)}}
\thanks{The authors are with the Department of Electrical Engineering, Eindhoven University of Technology, and also with the Eindhoven Artificial Intelligence Systems Institute, PO Box 513, Eindhoven 5600 MB, The Netherlands. Z. Sun is with College of Engineering, Peking University, Beijing 100871, China
{\tt\small \{ m.li3, p.j.w.koelewijn, s.weiland\}@tue.nl, zhiyong.sun@pku.edu.cn}
}}

\maketitle

\begin{abstract}
Sontag's universal formula is a widely used technique for stabilizing control through control Lyapunov functions. Recently, it has been extended to address safety-critical control by incorporating control barrier functions (CBFs). However, deriving a universal formula that satisfies requirements on essential properties, including safety, smoothness, and robustness against input disturbances, is still an open problem. To address this challenge, this paper introduces a novel solution – a tunable universal formula – by incorporating a (state-dependent) tunable term into Sontag's formula. This tunable term enables the regulation of safety-critical control performances, allowing the attainment of desired properties through a proper selection of tunable terms. Generally, the tunable universal formula can be seen as a controller that improves the quadratic program (QP)-synthesized controllers in terms of robustness and smoothness, while also reducing the conservatism (corresponding to robustness) in Sontag’s formula. Furthermore, we extend the tunable universal formula to address safety-critical control problems with norm-bounded input constraints, showcasing its applicability across diverse control scenarios. Finally, we demonstrate the efficacy of our method through a two-link manipulator safe tracking example, investigating the essential properties including safety, smoothness, and robustness against input disturbances under various tunable terms.
\end{abstract}

\begin{IEEEkeywords}
Safety-Critical Control, Control Barrier Functions, Universal Formula, Tunable Term
\end{IEEEkeywords}

\section{Introduction}
\label{sec:introduction}
\IEEEPARstart{S}{afety}-critical control refers to the design of control systems within environments characterized by strict safety constraints~\cite{Zeroing_CBF}. These constraints involve physical hardware limits (e.g., workspace, joint position, and velocity constraints in robotic arms~\cite{robotic_arm}) and controller constraints for safe system operations (e.g., collision, contact force, and range constraints in quadrotor applications~\cite{wu2016safety}). Generally, designing controllers to incorporate these safety-critical constraints into practical applications is challenging, which requires achieving the forward invariance of a safe set, defined as super-level sets of scalar constraint functions~\cite{CBF_Tutorial}.
%

Control barrier functions (CBFs), which utilize Lyapunov-like arguments to ensure set forward invariance, have received significant attention in recent years~\cite{Zeroing_CBF,robotic_arm,wu2016safety,CBF_Tutorial,cohen2023characterizing,ISSf,morris2013sufficient,Safety_Margin,Inverse_Optimality}. Generally, CBFs are advantageous for handling nonlinear systems, enabling real-time control, and effectively managing high-relative-degree constraints~\cite{xiao2021high}. These advantages have led to their application in various safety-critical scenarios~\cite{Zeroing_CBF}, such as adaptive cruise control, lane keeping, and bipedal robot walking. Within existing studies, a prevalent application of CBFs is the formulation of quadratic programs (QPs) for controller synthesis~\cite{Zeroing_CBF}. The fundamental properties of the QP-synthesized controller, such as robustness against input disturbances, smoothness (or Lipschitz continuity), and inverse optimality, are studied in~\cite{cohen2023characterizing,ISSf,morris2013sufficient,Safety_Margin,Inverse_Optimality}. Among the various properties, smoothness and robustness are two crucial properties in safety-critical control design. Specifically, smooth control laws are essential for implementing certain safety-critical control methods (e.g. safe backstepping algorithms through multi-layer cascaded dynamics~\cite{cohen2024safety}) and their preference in practical applications like ensuring passenger comfort in autonomous driving~\cite{xiao2023barriernet}. On the other hand, robustness is crucial because input disturbances in a dynamical model often degrade system performance and compromise safety guarantees in real-world implementations~\cite{Safety_Margin,ISSf}. However, it was revealed that the QP-synthesized controller may exhibit non-smooth behavior~\cite{cohen2023characterizing} and requires additional modifications to enhance its robustness against input disturbances~\cite{ISSf}. As an alternative, Sontag's universal formula~\cite{Sontag_fromula}, which is a frequently used tool in control Lyapunov theory for the design of stabilizing controllers, has been adapted to address challenges in safety-critical control using CBFs~\cite{wieland2007constructive,li2023graphical,cohen2023characterizing}. In contrast to the QP-synthesized controller, Sontag's formula provides a smooth controller and additional benefits, including enhanced robustness against input disturbances in safety-critical control designs~\cite{mestres2023feasibility}.
However, using Sontag's formula in safety-critical control lacks flexibility. This often results in a controller that is ``overly robust'' in terms of safety guarantees, corresponding to the concept of safety conservatism~\cite{alan2021safe}, which can lead to excessively conservative behavior~\cite{cohen2023characterizing}. Therefore, it is sometimes necessary to reduce the level of robustness to achieve desirable performance.

According to the above discussions, we conclude that both the QP-synthesized controller and Sontag's formula may fail to achieve the desired safety, smoothness, and robustness in certain applications. To address this issue, a solution is proposed in~\cite{cohen2023characterizing}, which aims to design a smooth controller with quantified robustness. This is achieved by recognizing that Sontag's formula is derived from solving an algebraic equation defined by an implicit function~\cite{sontag2013mathematical}. The authors of~\cite{cohen2023characterizing} suggest constructing a tunable implicit function to adjust the robustness of Sontag's formula while ensuring smoothness by satisfying specific conditions. However, the solution in~\cite{cohen2023characterizing} has two primary drawbacks. First, it requires formulating a set of ordinary differential equations and employing set invariance theory to derive tunable universal formulas. This process complicates the construction of the tunable implicit function, and the complexities of set invariance theory often hinder the analysis and understanding. Second, the tunable implicit function does not explicitly or intuitively show how it regulates safety, smoothness, and robustness.

In this paper, with the same objective to~\cite{cohen2023characterizing}, we aim to derive a universal formula to address a safety-critical task with desirable smoothness and robustness. Inspired by our prior work~\cite{li2024unifying}, which extends Lin-Sontag's formula~\cite{Norm_Bounded} to address a stabilizing control problem, we generalize the idea and propose a tunable universal formula approach to achieve the objective in this paper. As a result, we propose a novel and general framework for deriving a tunable universal formula, which improves the performance of QP-synthesized controllers in terms of robustness and smoothness. Moreover, it helps reduce the conservatism typically associated with Sontag's formula, offering a more flexible and versatile control solution. Compared to the approach in~\cite{cohen2023characterizing}, instead of starting from a tunable implicit function, we begin with a controller that has a predetermined structure (i.e., Sontag's formula in this paper) and incorporate a~\textit{tunable term} to regulate the performance of a controller. Compared to the results in~\cite{cohen2023characterizing}, our approach presents two main advantages over the results in~\cite{cohen2023characterizing}: (i) Simplicity. Unlike the first drawback identified in \cite{cohen2023characterizing}, we only need to ensure that the tunable term meets specified conditions and validity ranges; and (ii) Interpretability. The tunable term in our formula reflects its impact on safety and robustness, offering better interpretability compared to the approach in~\cite{cohen2023characterizing}. Additionally, we highlight that the tunable universal formula proposed in this paper can be utilized to address a safety-critical control problem with a norm-bounded input constraint. We demonstrate that, with a simple adaptation to the tunable term's validity range, the controller's performance can still be regulated in terms of safety, smoothness, and robustness, while satisfying the norm-bounded input constraint.

The remainder of this paper is organized as follows. Section~\ref{Preliminaries} presents the mathematical preliminaries and problem formulation. Section~\ref{Main_Results} derives the tunable universal formula and analyzes its safety, smoothness, and robustness properties. We also extend it to safety-critical control with norm-bounded inputs. Section~\ref{Simulation} demonstrates its effectiveness through a two-link manipulator safe tracking simulation. Section~\ref{Conclusions} concludes the paper. 
\section{Preliminaries and Problem Statement}\label{Preliminaries}
Consider a control-affine system
	\begin{equation}\label{Affine_Control_System}
	    \dot{\mathbf{x}}=\mathbf{f}(\mathbf{x})+\mathbf{g}(\mathbf{x})\mathbf{u},
	\end{equation}
where $\mathbf{x}(t)\in\mathbb{R}^{n}$ is the state, $\mathbf{u}(t)\in\mathbb{R}^{m}$ is the control input, and $\mathbf{f}:\mathbb{R}^{n}\rightarrow\mathbb{R}^{n}$ and $\mathbf{g}:\mathbb{R}^{n}\rightarrow\mathbb{R}^{n\times m}$ are smooth functions (i.e., $C^{\infty}$ differentiable functions). Given a Lipschitz continuous state-feedback controller $\mathbf{k}:\mathbb{R}^{n}\rightarrow\mathbb{R}^{m}$, the closed-loop system dynamics are:
\begin{equation}\label{state_feedback_dyanmics}
\dot{\mathbf{x}} = \mathbf{f}_{\text{cl}}(\mathbf{x})\triangleq\mathbf{f}(\mathbf{x})+\mathbf{g}(\mathbf{x}) \mathbf{k}(\mathbf{x}).
\end{equation}
Since the functions $\mathbf{f}$ and $\mathbf{g}$ are assumed to be smooth, and $\mathbf{k}$ is Lipschitz continuous, the function $\mathbf{f}_{\text{cl}}(\mathbf{x})$ is also Lipschitz continuous. Consequently, for any initial condition $\mathbf{x}_{0}=\mathbf{x}(\mathbf{0})\in\mathbb{R}^{n}$, there exists a time interval $I(\mathbf{x}_{0})=[0,t_{\max})$ such that $\mathbf{x}(t)$ is the unique solution to~\eqref{state_feedback_dyanmics} on $I(\mathbf{x}_{0})$. 
\subsection{CBFs and Universal Formula for Safety Control}
Consider a closed set $\mathcal{C}\subset\mathbb{R}^{n}$ as the $0$-superlevel set of a  smooth function $h:\mathbb{R}^{n}\rightarrow\mathbb{R}$, which is defined as
\begin{equation}\label{Invariant_Set}
		\begin{aligned}
			\mathcal{C} & \triangleq\left\{\mathbf{x}\in \mathbb{R}^{n}: h(\mathbf{x}) \geq 0\right\}, \\
			\partial \mathcal{C} & \triangleq\left\{\mathbf{x}\in\mathbb{R}^{n}: h(\mathbf{x})=0\right\}, \\
			\operatorname{Int}(\mathcal{C}) & \triangleq\left\{\mathbf{x}\in \mathbb{R}^{n}: h(\mathbf{x})>0\right\},
		\end{aligned}
\end{equation}
where we assume that $\mathcal{C}$ is nonempty and has no isolated points, that is, $\operatorname{Int}(\mathcal{C}) \neq \emptyset$ and $\bar{\operatorname{Int}(\mathcal{C})}=\mathcal{C}$.
\begin{Def}
    (Forward Invariance and Safety) A set $\mathcal{C}\subset\mathbb{R}^{n}$ is~\textit{forward invariant} if for every $\mathbf{x}_{0}\in\mathcal{C}$, the solution to~\eqref{state_feedback_dyanmics} satisfies $\mathbf{x}(t)\in\mathcal{C}$ for all $t\in I(\mathbf{x}_{0})$. The system is~\textit{safe} on the set $\mathcal{C}$ if the set $\mathcal{C}$ is forward invariant. 
\end{Def}

\begin{Def}
(Extended class $\mathrm{K}$ function, $\mathrm{K}_{e}$). A continuous function $\alpha:(-b,a)\rightarrow\mathbb{R}$, with $a,b>0$, is an extended class $\mathrm{K}$ function ($\alpha\in\mathrm{K}_{e}$) if $\alpha(0)=0$, and $\alpha$ is strictly monotonically increasing.
\end{Def}
\begin{Def}\label{CBF_Def}
		(CBFs~\cite{Zeroing_CBF}). Let $\mathcal{C}\subset\mathbb{R}^{n}$ be the $0$-superlevel set of a smooth function $h:\mathbb{R}^{n}\rightarrow\mathbb{R}$ which is defined by \eqref{Invariant_Set}. Then $h$ is a CBF for \eqref{Affine_Control_System} if there exists a smooth function $\beta\in\mathrm{K}_{e}$ such that, for all $\mathbf{x}\in\mathbb{R}^{n}$, there exists a control input $\mathbf{u}\in\mathbb{R}^{m}$ satisfying
		\begin{equation}\label{CBF_Condition}
    \begin{split}
        &c(\mathbf{x})+\mathbf{d}(\mathbf{x})\mathbf{u}\geq 0,
    \end{split}
\end{equation}
where $c(\mathbf{x})=L_{\mathbf{f}} h(\mathbf{x})+\beta({h(\mathbf{x})})$, $\mathbf{d}(\mathbf{x})=L_{\mathbf{g}} h(\mathbf{x})$, $L_{\mathbf{f}} h(\mathbf{x})\triangleq\frac{\partial h(\mathbf{x})}{\partial\mathbf{x}}\mathbf{f}(\mathbf{x})$ and $L_{\mathbf{g}} h(\mathbf{x})\triangleq\frac{\partial h(\mathbf{x})}{\partial\mathbf{x}}\mathbf{g}(\mathbf{x})$ denote the Lie derivatives along $\mathbf{f}$ and $\mathbf{g}$, respectively. 
\end{Def}

The CBF condition~\eqref{CBF_Condition} limits the choice of admissible control inputs. Considering its dependency on $\mathbf{x}$, we write the admissible input as a set-valued function $\mathcal{S}_{\mathrm{h}}:\mathbb{R}^{n}\rightrightarrows \mathbb{R}^m$, $\mathcal{S}_{\mathrm{h}}(\mathbf{x})\triangleq\{\mathbf{u}\in\mathbb{R}^{m}: \text{CBF constraint}~\eqref{CBF_Condition}\}$.
\begin{figure}[tp]
 \centering
    \makebox[0pt]{%
    \includegraphics[width=0.28\textwidth]{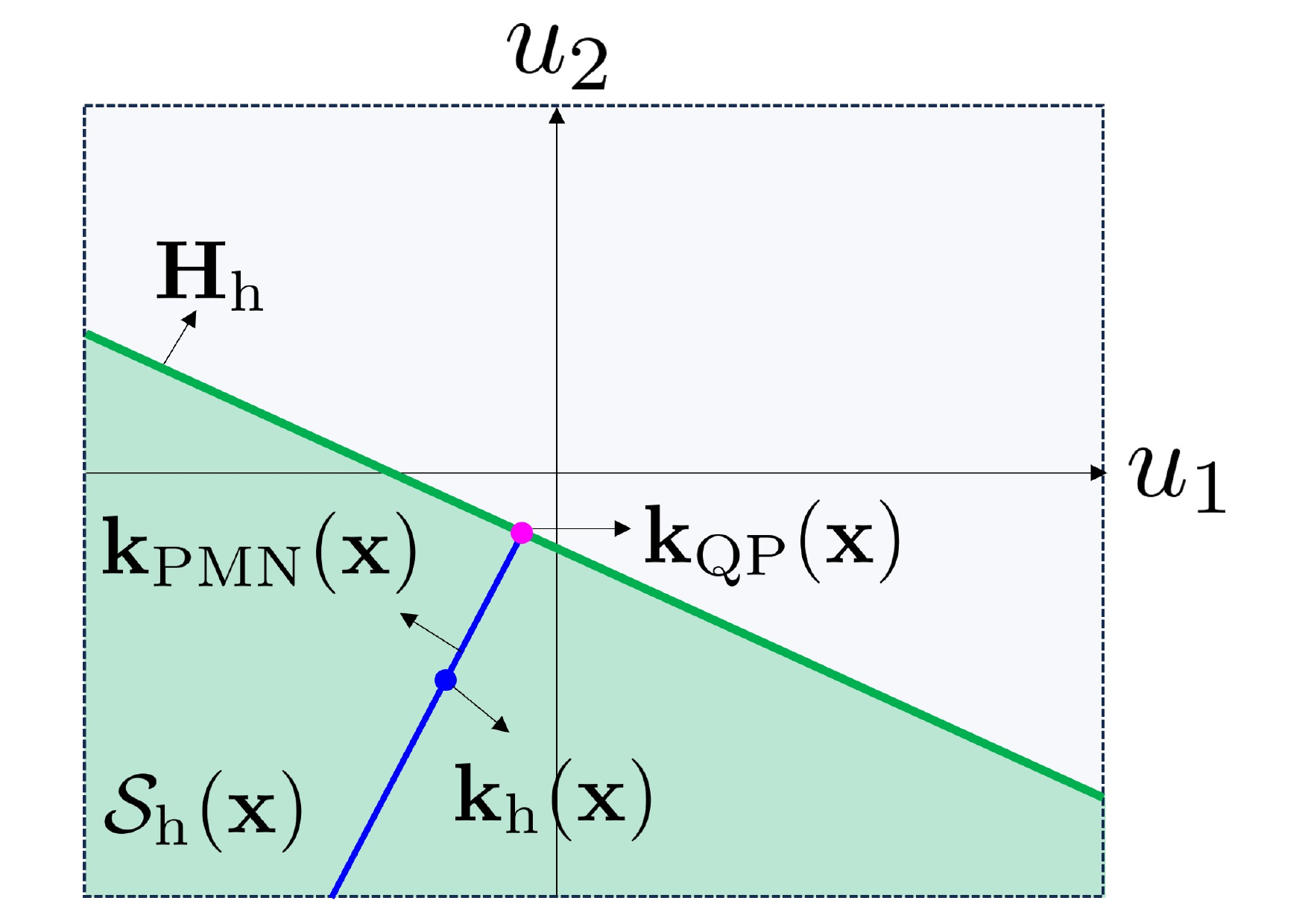}}
    \caption{A graphical illustration of universal formulas featuring different $\Gamma(\mathbf{x})$: the entire area corresponds to a 2-D control space $\mathbf{u}\in\mathbb{R}^{2}$; the admissible control input set $\mathcal{S}_{\mathrm{h}}(\mathbf{x})$ is represented by the green area; the green line represents the boundary of $\mathcal{S}_{\mathrm{h}}(\mathbf{x})$, forming a half-space denoted as $\mathbf{H}_{h}$; the blue ray indicates the control law $\mathbf{k}_{\mathrm{PMN}}(\mathbf{x})$ with various available choices for $\Gamma(\mathbf{x})$; the magenta point corresponds to the QP-synthesized controller $\mathbf{k}_{\mathrm{QP}}(\mathrm{x})$; and the blue point denotes Sontag's formula $\mathbf{k}_{\mathrm{Stg}}(\mathbf{x})$, which is constructed with $\Gamma_{\mathrm{Stg}}(\mathbf{x})$.}
    \label{CBF_Sontag_Universal}
\end{figure}
A graphical illustration of $\mathcal{S}_{\mathrm{h}}(\mathbf{x})$ in a 2-D control input space $\mathbf{u}\in\mathbb{R}^{2}$ is provided in Fig.~\ref{CBF_Sontag_Universal}. For each $\mathbf{x}\in\mathbb{R}^{n}$, $S_{h}(\mathbf{x})$ is a half-space. Moreover, the hyperplane corresponding to the condition~\eqref{CBF_Condition} is defined as $\mathbf{H}_{h}(\mathbf{x})=\{\mathbf{u}\in\mathbb{R}^{m} \mid c(\mathbf{x})+\mathbf{d}(\mathbf{x}) \mathbf{u}=0\}$, which is illustrated by the blue ray in Fig.~\ref{CBF_Sontag_Universal}.
\begin{Rmk}\label{Rmk_Strict}
    To ensure the Lipschitz continuity of the tunable universal formula, it is crucial to modify the standard CBF condition~\eqref{CBF_Condition} into a strict inequality, see~\cite{Safety_Margin} and~\cite[Remarks 1 and 3]{cohen2024safety}.
\end{Rmk}
\begin{Thm}\label{State_Feedback_Safety}
(\hspace{-0.005cm}\cite{Zeroing_CBF}) Given $\mathcal{C}\subset\mathbb{R}^{n}$ defined as the $0$-superlevel set of a smooth function $h:\mathbb{R}^{n}\rightarrow\mathbb{R}$, and $h$ serves as a CBF for~\eqref{Affine_Control_System} on $\mathcal{C}$, then any Lipschitz continuous feedback controller $\mathbf{k}:\mathbb{R}^{n}\rightarrow\mathbb{R}^{m}$ satisfying the condition~\eqref{CBF_Condition} ensures the safety of the closed-loop system~\eqref{Affine_Control_System} with respect to the set $\mathcal{C}$.
\end{Thm}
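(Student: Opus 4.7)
The plan is to show that under the given hypotheses the scalar function $t \mapsto h(\mathbf{x}(t))$ stays nonnegative along every closed-loop trajectory initialized in $\mathcal{C}$, which is exactly forward invariance of $\mathcal{C}$ and hence safety. Since $\mathbf{f}$, $\mathbf{g}$, and $\mathbf{k}$ are smooth, the closed-loop vector field $\mathbf{f}_{\text{cl}}$ is smooth, so local existence and uniqueness of $\mathbf{x}(t)$ on $I(\mathbf{x}_0)$ is already guaranteed by the preamble. My task is therefore purely to prove invariance.

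First I would evaluate the CBF inequality \eqref{CBF_Condition} along the closed-loop flow. Substituting $\mathbf{u}=\mathbf{k}(\mathbf{x})$ and using the definitions of $c(\mathbf{x})$ and $\mathbf{d}(\mathbf{x})$, together with the chain rule $\dot{h}(\mathbf{x})=L_{\mathbf{f}}h(\mathbf{x})+L_{\mathbf{g}}h(\mathbf{x})\,\mathbf{k}(\mathbf{x})$, yields the differential inequality
\begin{equation*}
\dot{h}(\mathbf{x}(t)) \;\geq\; -\beta\bigl(h(\mathbf{x}(t))\bigr) \qquad \text{for all } t \in I(\mathbf{x}_0).
\end{equation*}
This is the scalar inequality I would exploit in the next step.

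Next I would invoke a standard comparison argument. Let $y(t)$ solve the scalar initial value problem $\dot{y}=-\beta(y)$ with $y(0)=h(\mathbf{x}_0)\ge 0$; since $\beta\in\mathrm{K}_e$ is smooth with $\beta(0)=0$, the set $\{y\ge 0\}$ is forward invariant for this scalar system (indeed $y\equiv 0$ is an equilibrium and $\beta$ is strictly increasing, so trajectories starting at $y(0)\ge 0$ cannot cross zero). The comparison lemma then gives $h(\mathbf{x}(t))\ge y(t)\ge 0$ on $I(\mathbf{x}_0)$, i.e.\ $\mathbf{x}(t)\in\mathcal{C}$ for all $t\in I(\mathbf{x}_0)$, which is the definition of forward invariance and thus of safety.

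I expect the only subtle point to be justifying the comparison step rigorously: one needs to check that $-\beta(\cdot)$ is locally Lipschitz (guaranteed by smoothness of $\beta$ on the relevant interval, as stated in the CBF definition) so that the standard comparison lemma applies, and then argue that $h(\mathbf{x}(t))$ cannot become negative before reaching the boundary (which is precisely what the inequality $\dot{h}\ge -\beta(h)$ with $\beta(0)=0$ prevents, via Nagumo's theorem as an alternative route). Both the comparison-lemma path and the Nagumo/viability path are routine; the substantive content is really just the translation of \eqref{CBF_Condition} into a differential inequality for $h$ along the closed-loop flow.
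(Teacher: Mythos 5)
Your argument is correct: translating \eqref{CBF_Condition} with $\mathbf{u}=\mathbf{k}(\mathbf{x})$ into $\dot{h}\geq-\beta(h)$ along the closed-loop flow and then applying the comparison lemma (with the observation that $y\equiv 0$ is an equilibrium of $\dot{y}=-\beta(y)$, so $y(t)\geq 0$ whenever $y(0)=h(\mathbf{x}_0)\geq 0$) establishes forward invariance of $\mathcal{C}$. The paper itself offers no proof of this theorem, merely citing it from the reference on zeroing CBFs, and your comparison-lemma route is precisely the standard argument given there, so your proposal matches the intended proof.
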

    
To provide freedom to regulate the control performance, the tightened CBF condition, which is a sufficient condition of~\eqref{CBF_Condition}, is introduced as follows: 
\begin{equation}\label{Tightened_CBF}
    c(\mathbf{x})+\mathbf{d}(\mathbf{x})\mathbf{u}\geq\Gamma(\mathbf{x}),
\end{equation}
where $\Gamma:\mathbb{R}^{n}\rightarrow\mathbb{R}_{\geq 0}$ is a positive semidefinite function satisfying $c(\mathbf{x})\geq\Gamma(\mathbf{x})$ whenever $\mathbf{d}(\mathbf{x})=\mathbf{0}$.
\begin{Rmk}
Note that, in~\eqref{Tightened_CBF}, every choice of $\Gamma(\mathbf{x})$ defines a specific CBF. Specifically, when $\Gamma(\mathbf{x}) = 0$, Eq.~\eqref{Tightened_CBF} represents the standard CBF condition, which requires  $c(\mathbf{x})\geq 0$ whenever $\mathbf{d}(\mathbf{x})=\mathbf{0}$. When we select positive values for $\Gamma(\mathbf{x})$, it leads to a more robust CBF~\cite{cohen2022robust}, which requires that $c(\mathbf{x}) \geq \Gamma(\mathbf{x})$ whenever $\mathbf{d}(\mathbf{x}) = \mathbf{0}$. In this paper, we assume the existence of such a robust CBF, which is a commonly used assumption~\cite{ISSf,Safety_Margin}. However, the construction of a robust CBF is not addressed here, as we focus primarily on the controller design. For those interested in constructing robust CBFs, we recommend referring to~\cite{cohen2022robust} and the references therein.
\end{Rmk}

Based on Theorem~\ref{State_Feedback_Safety}, a feedback control law $\mathbf{k}_{\mathrm{PMN}}(\mathbf{x})$ can be synthesized by solving the following pointwise min-norm (PMN) optimization problem:
\begin{equation}\label{PMN_Safety}
\begin{aligned}
\mathbf{k}_{\mathrm{PMN}}(\mathbf{x})=\underset{\mathbf{u} \in \mathbb{R}^m}{\arg \min } & \,\,\frac{1}{2}\left\|\mathbf{u}\right\|^2 \\
\text {s.t.} &  \,\,\text{Condition}~\eqref{Tightened_CBF},
\end{aligned}
\end{equation}
where $\|\mathbf{u}\|$ denotes the $2$-norm of the control input.
\begin{Lem}\label{PMN_Solution}
The solution to~\eqref{PMN_Safety} can be explicitly expressed as
\begin{equation}\label{QP_CBF_Solution}
    \begin{split}
      \mathbf{k}_{\mathrm{PMN}}(\mathbf{x})=\lambda_{\mathrm{PMN}}(c(\mathbf{x})-\Gamma(\mathbf{x}),\|\mathbf{d}( \mathbf{x}) \|^{2}) \mathbf{d}(\mathbf{x})^{\top},
    \end{split}
\end{equation}
where 
\begin{equation}\label{Relu_Function}
\lambda_{\mathrm{PMN}}(c, d):= \begin{cases}0, & d=0, \\ \operatorname{ReLU}(-c / d), & d>0,\end{cases}
\end{equation}
and $\operatorname{ReLU}(y):=\max \{0, y\}$.
\end{Lem}
\begin{proof}
The solution to~\eqref{PMN_Safety} can be obtained using the projection theorem. More details can be found in~\cite{Universal_Formula_Gain_Margin_1}.    
\end{proof}

In Fig.~\ref{CBF_Sontag_Universal}, the control law $\mathbf{k}_{\mathrm{PMN}}(\mathbf{x})$ corresponds to the blue ray. Each specific choice of $\Gamma(\mathbf{x})$ corresponds to a distinct point on the blue ray. Actually, the selection of the function $\Gamma(\mathbf{x})$ significantly impacts the performance of $\mathbf{k}_{\mathrm{PMN}}(\mathbf{x})$ for safety control, such as smoothness, robustness, inverse optimality, etc. For instance, by setting $\Gamma(\mathbf{x})=\Gamma_{\mathrm{QP}}(\mathbf{x})=0$ in the constraint of equation~\eqref{PMN_Safety}, as per Equation~\eqref{QP_CBF_Solution}, it yields a control law:
\begin{equation}\label{QP_Synthesized}
    \begin{split}
      \mathbf{k}_{\mathrm{QP}}(\mathbf{x})=\lambda_{\mathrm{PMN}}(c(\mathbf{x}),\|\mathbf{d}( \mathbf{x}) \|^{2}) \mathbf{d}(\mathbf{x})^{\top},
    \end{split}
\end{equation}
where $\mathbf{k}_{\mathrm{QP}}(\mathbf{x})$ is commonly known as a QP-synthesized controller~\cite{cohen2023characterizing}. As shown in Fig.~\ref{CBF_Sontag_Universal}, $\mathbf{k}_{\mathrm{QP}}(\mathbf{x})$ corresponds to the magenta point. 
Alternatively, one can choose
\begin{equation}\label{Gamma_Choice}
    \Gamma(\mathbf{x})=\Gamma_{\mathrm{Stg}}(\mathbf{x})=\sqrt{c^{2}(\mathbf{x})+s(\|\mathbf{d}(\mathbf{x}) \|^{2})\|\mathbf{d}(\mathbf{x}) \|^{2}},
\end{equation}
where $s:\mathbb{R}\rightarrow\mathbb{R}$ is smooth and satisfies $s(0)=0$ and $s(d)>0$ for all $d\neq 0$. This gives
\begin{equation}\label{Scaling_Condition}
    c(\mathbf{x})+\mathbf{d}(\mathbf{x}) \mathbf{u}\geq \Gamma_{\mathrm{Stg}}(\mathbf{x}).
\end{equation}
By applying condition~\eqref{Scaling_Condition} to the constraint in~\eqref{PMN_Safety} and leveraging Lemma~\ref{PMN_Solution}, we obtain Sontag's formula for safety-critical control, which is given by:
\begin{equation}\label{CBF_Universal_Law}
\begin{split}
\mathbf{k}_{\mathrm{Stg}}(\mathbf{x})=\lambda_{\mathrm{Stg}}(c(\mathbf{x}),\|\mathbf{d}( \mathbf{x}) \|^{2}) \mathbf{d}(\mathbf{x})^{\top},
\end{split}
\end{equation}
where 
\begin{equation}\label{Stg_Function}
\lambda_{\mathrm{Stg}}(c, d):= \begin{cases}
                     0, & d=0, \\
                   \left(-c+\sqrt{c^{2}+s(d)d}\right)/d, & d>0.
                   \end{cases}
\end{equation}
As illustrated in Fig.~\ref{CBF_Sontag_Universal}, Sontag's formula for safety-critical control, i.e., $\mathbf{k}_{\mathrm{Stg}}(\mathbf{x})$ in~\eqref{CBF_Universal_Law}, corresponds to the blue point on the blue ray in Fig.~\ref{CBF_Sontag_Universal}. This indicates that Sontag's formula is a special solution to the PMN optimization in~\eqref{PMN_Safety} since the control law $\mathbf{k}_{\mathrm{Stg}}(\mathbf{x})$ is obtained from~\eqref{QP_CBF_Solution} by setting the specific function as $\Gamma(\mathbf{x}) = \Gamma_{\mathrm{Stg}}(\mathbf{x})$.
\subsection{Safety, Smoothness, and Robustness}
In this subsection, we examine the safety, smoothness, and robustness of the QP-synthesized controller and Sontag's formula.
\subsubsection{Safety}\label{Safety_Exist}
Substituting $\mathbf{k}_{\mathrm{QP}}(\mathrm{x})$ from~\eqref{QP_Synthesized} and $\mathbf{k}_{\mathrm{Stg}}(\mathrm{x})$ from~\eqref{CBF_Universal_Law} into~\eqref{CBF_Condition} shows that both controllers satisfy the CBF condition~\eqref{CBF_Condition}. Therefore, according to Theorem~\ref{State_Feedback_Safety}, we conclude that both the QP-synthesized controller and Sontag's formula ensure the safety of the system~\eqref{Affine_Control_System}. 
\subsubsection{Smoothness}\label{Smoothness_exist}
The function $\lambda_{\mathrm{PMN}}(c, d)$, as defined in~\eqref{Relu_Function}, may be non-smooth due to its piecewise nature and the inherent non-smoothness of the $\operatorname{ReLU}(\cdot)$ function. Therefore, the QP-synthesized controller $\mathbf{k}_{\mathrm{QP}}(\mathbf{x})$ given in~\eqref{QP_Synthesized} may be non-smooth, as discussed in~\cite{cohen2023characterizing}. In contrast, Sontag's formula for safety-critical control, i.e., $\mathbf{k}_{\mathrm{Stg}}(\mathrm{x})$ in~\eqref{CBF_Universal_Law}, is shown to be smooth~\cite{cohen2023characterizing}, with its smoothness being proved by using the approach in~\cite[Proposition 5.9.10]{sontag2013mathematical}.
\subsubsection{Robustness}\label{Robustness_exist}
The QP-synthesized controller $\mathbf{k}_{\mathrm{QP}}(\mathbf{x})$ could exhibit non-robust behavior when it is subjected to an input disturbance $\mathbf{w}(\mathbf{x})$. Specifically, the CBF condition is ensured by the QP-synthesized controller $\mathbf{k}_{\mathrm{QP}}(\mathbf{x})$, i.e., $c(\mathbf{x})+\mathbf{d}(\mathbf{x})\mathbf{k}_{\mathrm{QP}}(\mathbf{x})\geq 0$, while an input disturbance $\mathbf{w}(\mathbf{x})$ will give a control input $\mathbf{k}_{\mathrm{QP-D}}(\mathbf{x})=\mathbf{k}_{\mathrm{QP}}(\mathbf{x})+\mathbf{w}(\mathbf{x})$. By substituting $\mathbf{k}_{\mathrm{QP-D}}(\mathbf{x})$ into~\eqref{CBF_Condition}, we obtain the following condition:
\begin{equation}\label{QP_Disturbed}
    c(\mathbf{x})+\mathbf{d}(\mathbf{x})\mathbf{k}_{\mathrm{QP-D}}(\mathbf{x})\geq \mathbf{d}(\mathbf{x})\mathbf{w}(\mathbf{x}).
\end{equation}
Since the term $\mathbf{d}(\mathbf{x})\mathbf{w}(\mathbf{x})$ in~\eqref{QP_Disturbed} can be negative, it is possible to have $c(\mathbf{x}) + \mathbf{d}(\mathbf{x})\mathbf{k}_{\mathrm{QP-D}}(\mathbf{x}) < 0$. To solve this problem, the system~\eqref{Affine_Control_System} may become unsafe due to the input disturbance $\mathbf{w}(\mathbf{x})$. As shown in Fig.~\ref{CBF_Sontag_Universal}, the QP-synthesized controller $\mathbf{k}_{\mathrm{QP}}(\mathbf{x})$ lies on the boundary of the set $\mathcal{S}_{\mathrm{h}}(\mathbf{x})$. In this scenario, a small input disturbance $\mathbf{w}(\mathbf{x})$ may cause $\mathbf{k}_{\mathrm{QP-D}}(\mathbf{x}) = \mathbf{k}_{\mathrm{QP}}(\mathbf{x}) + \mathbf{w}(\mathbf{x})$ to fall outside of $\mathcal{S}_{\mathrm{h}}(\mathbf{x})$, leading to the system~\eqref{Affine_Control_System} being unsafe.

In contrast, Sontag's formula $\mathbf{k}_{\mathrm{Stg}}(\mathbf{x})$ is more robust than the QP-synthesized controller $\mathbf{k}_{\mathrm{QP}}(\mathbf{x})$ in safety-critical control. In particular, by substituting Sontag's formula with input disturbance, i.e., $\mathbf{k}_{\mathrm{Stg-D}}(\mathbf{x})=\mathbf{k}_{\mathrm{Stg}}(\mathbf{x})+\mathbf{w}(\mathbf{x})$, into~\eqref{CBF_Condition} and applying the conditions in~\eqref{Gamma_Choice} and~\eqref{Scaling_Condition}, we can derive:
\begin{equation}\label{Sontag_Disturbed}
\begin{split}
    c(\mathbf{x})+\mathbf{d}(\mathbf{x})\mathbf{k}_{\mathrm{Stg-D}}(\mathbf{x})\geq \Gamma_{\mathrm{Stg}}(\mathbf{x})+\mathbf{d}(\mathbf{x})\mathbf{w}(\mathbf{x}).
\end{split}   
\end{equation}
By comparing~\eqref{QP_Disturbed} and~\eqref{Sontag_Disturbed}, we note that the control input $\mathbf{k}_{\mathrm{Stg-D}}$ in~\eqref{Sontag_Disturbed} is more robust to ensure the safety of the system~\eqref{Affine_Control_System} due to the existence of the positive definite function $\Gamma_{\mathrm{Stg}}(\mathbf{x})$ in~\eqref{Sontag_Disturbed}. Geometrically, as depicted in Fig.~\ref{CBF_Sontag_Universal}, Sontag's formula $\mathbf{k}_{\mathrm{Stg}}(\mathbf{x})$ lies within the set $\mathcal{S}_h(\mathbf{x})$, while the control law $\mathbf{k}_{\mathrm{QP}}(\mathbf{x})$ is on the boundary of $\mathcal{S}_h(\mathbf{x})$. Therefore, Sontag's formula is less likely to fall outside the set $\mathcal{S}_h(\mathbf{x})$ in the presence of the input disturbance $\mathbf{w}(\mathbf{x})$.
\begin{Rmk}\label{Rmk_Conservatism}
While Sontag's formula is robust against disturbance (compared to the QP-synthesized controller), applying Sontag's formula $\mathbf{k}_{\mathrm{Stg}}(\mathbf{x})$ in~\eqref{CBF_Universal_Law} can lead to conservative behavior for safety-critical control. With the choice in~\eqref{Gamma_Choice}, Sontag's formula $\mathbf{k}_{\mathrm{Stg}}(\mathbf{x})$ satisfies a stricter safety condition, i.e., the tightened CBF condition~\eqref{Scaling_Condition} with $\Gamma(\mathbf{x})=\Gamma_{\mathrm{Stg}}(\mathbf{x})$, compared to the standard CBF condition~\eqref{CBF_Condition}. Therefore, Sontag's formula exhibits a more conservative safety control performance compared to QP-synthesized controller. As demonstrated by the reach-avoid example from~\cite[Example 1]{cohen2023characterizing}, Sontag's formula can result in fairly conservative behavior, where the safety-critical control dominates the behavior in goal reaching.
\end{Rmk}
\subsection{Problem Statement}\label{Problem_Statement}
The above discussions suggest that both the QP-synthesized controller and Sontag's formula may not achieve the desired performance requirements when applied to safety-critical control scenarios. 
Given that the selection of $\Gamma(\mathbf{x})$ significantly influences the properties of the resulting controller, we aim to find a suitable $\Gamma(\mathbf{x})$ to derive alternative universal formulas that satisfy safety, smoothness, and robustness requirements. Therefore, the research problem of this paper is formally stated as follows.

\textit{\textbf{Problem Statement:}} Design a suitable $\Gamma(\mathbf{x})$ to construct a universal formula addressing a safety-critical control task, which should feature desirable properties, including safety, smoothness, and robustness.
\section{Main Results}\label{Main_Results}
In this section, a tunable universal formula is derived, and the properties of the tunable universal formula related to safety guarantees, smoothness, and robustness, are studied. Furthermore, we extend the tunable universal formula to the application of input-constrained safety-critical control.
\subsection{Tunable Universal Formulas for Safety-Critical Control}\label{Tun_Without_Input}
To address the problem stated in Section~\ref{Problem_Statement}, our solution is to design a function of the form $\Gamma(\mathbf{x})=\kappa(\mathbf{x})\Gamma_{\mathrm{Basis}}(\mathbf{x})$. Here, $\kappa:\mathbb{R}^{n}\rightarrow\mathbb{R}$ serves as a state-dependent tunable term, while $\Gamma_{\mathrm{Basis}}(\mathbf{x})$ represents a basis function that influences the properties of the derived universal formulas. Considering that Sontag's formula maintains numerous desirable properties, e.g., smoothness~\cite{cohen2023characterizing}, inverse optimality~\cite{Inverse_Optimality}, and strict safety guarantees~\cite{li2023graphical}, we set $\Gamma_{\mathrm{Basis}}(\mathbf{x})=\Gamma_{\mathrm{Stg}}(\mathbf{x})$ to derive a Sontag-like formula. Then the constraint in~\eqref{PMN_Safety} is reformulated as: 
\begin{equation}\label{Scaling_Condition_kappa}
    c(\mathbf{x})+\mathbf{d}(\mathbf{x}) \mathbf{u}\geq \kappa(\mathbf{x})\Gamma_{\mathrm{Stg}}(\mathbf{x}).
\end{equation}
\begin{Rmk}
In~\eqref{Scaling_Condition_kappa}, we set $\Gamma_{\mathrm{Basis}}(\mathbf{x}) = \Gamma_{\mathrm{Stg}}(\mathbf{x})$ and introduce a state-dependent term $\kappa(\mathbf{x})$ to adjust the robustness of the controller under the condition~\eqref{Scaling_Condition_kappa}. Notably, selecting alternative basis functions provides the designed controllers with different and crucial properties. For instance, in~\cite[Eq. (28)]{ISSf}, the authors choose $\Gamma_{\mathrm{Basis}}(\mathbf{x}) = \|\mathbf{d}(\mathbf{x})\|^{2}$ to account for input disturbances and establish a condition that ensures input-to-state safety of the system~\eqref{Affine_Control_System}. Moreover, a tunable term is introduced in the basis function to adjust the robustness in~\cite{alan2021safe}, which allows the design of a controller that balances robustness and safety. In contrast to~\cite{alan2021safe} and~\cite{ISSf}, our focus extends beyond robustness and safety to also emphasize smoothness. Thus, we set $\Gamma_{\mathrm{Basis}}(\mathbf{x}) = \Gamma_{\mathrm{Stg}}(\mathbf{x})$ to preserve the smoothness property of Sontag's formula.
\end{Rmk}

To ensure~\eqref{Scaling_Condition_kappa} hold for safety-critical control, two conditions must be satisfied: i) $\kappa(\mathbf{x})\Gamma_{\mathrm{Stg}}(\mathbf{x})\geq 0$, and ii) $c(\mathbf{x})\geq\kappa(\mathbf{x})\Gamma_{\mathrm{Stg}}(\mathbf{x})$ when $\mathbf{d}(\mathbf{x})=\mathbf{0}$. With conditions i) and ii), we determine the range of $\kappa(\mathbf{x})$ to be $0\leq\kappa(\mathbf{x})\leq 1$. Next, we substitute the constraint in~\eqref{PMN_Safety} with the condition~\eqref{Scaling_Condition_kappa}. With Lemma~\ref{PMN_Solution}, the following control law is obtained:
    \begin{equation}\label{PMN_Tunable_Universal_Formula}
    \begin{split}
      \bar{\mathbf{k}}_{\mathrm{Tun}}(\mathbf{x})=\bar{\lambda}_{\mathrm{Tun}}(c(\mathbf{x}),\|\mathbf{d}( \mathbf{x}) \|^{2},\kappa(\mathbf{x})) \mathbf{d}(\mathbf{x})^{\top},
    \end{split}
\end{equation}
where
\begin{equation}\label{PMN_Tunable_Function}
\begin{split}
      \bar{\lambda}_{\mathrm{Tun}}(c,d,\iota):=\left\{\begin{array}{ll}
      0,&\text{if}\,\, d= 0,\\
      \operatorname{ReLU}(\frac{-c+\iota\sqrt{c^2+s(d)d}}{d}),& \text{if}\,\, d\neq 0.    
\end{array}\right.  
\end{split}
\end{equation}
\begin{Rmk}
    For the control law $\bar{\mathbf{k}}_{\mathrm{Tun}}(\mathbf{x})$ defined in~\eqref{PMN_Tunable_Universal_Formula}, different choices of the tunable term $\kappa(\mathbf{x})$ will result in distinct control laws, each characterized by desirable smoothness, safety, and robustness properties. For instance, 
    when one sets $\kappa(\mathbf{x})=0$ for $\bar{\mathbf{k}}_{\mathrm{Tun}}(\mathbf{x})$, the QP-synthesized controller $\mathbf{k}_{\mathrm{QP}}(\mathbf{x})$ presented in~\eqref{QP_Synthesized} is obtained, and $\mathbf{k}_{\mathrm{QP}}(\mathbf{x})$ maybe non-smooth as discussed in Section~\ref{Smoothness_exist}. When $\kappa(\mathbf{x})=1$, the control law $\bar{\mathbf{k}}_{\mathrm{Tun}}(\mathbf{x})$ leads to Sontag's formula $\mathbf{k}_{\mathrm{Stg}}(\mathbf{x})$ as specified in~\eqref{CBF_Universal_Law}, which is a smooth function (cf. Section~\ref{Smoothness_exist}). Moreover, it is important to highlight that the selection of the tunable term clearly reflects its influence on both safety and robustness properties, offering better interpretability than~\cite{cohen2023characterizing}. Specifically, as the value of $\kappa(\mathbf{x})$ increases, the controller $\bar{\mathbf{k}}_{\mathrm{Tun}}(\mathbf{x})$ will become more robust to input disturbance and have more conservative safety-critical control behavior according to the condition~\eqref{Scaling_Condition_kappa} and similar analysis as in Section~\ref{Robustness_exist}. Motivated by the above analysis, we call $\bar{\mathbf{k}}_{\mathrm{Tun}}(\mathbf{x})$ a tunable universal formula.
\end{Rmk}
\subsubsection{Safety}\label{Safety_Guarantee_section}
We provide the following theorem to guarantee the safety of the system~\eqref{Affine_Control_System} by using the tunable universal formula given in~\eqref{PMN_Tunable_Universal_Formula}.
    \begin{Thm}\label{Safety_Guarantee_Thm}        
    Assume that $h:\mathbb{R}^{n}\rightarrow\mathbb{R}$ is a CBF.
Then the tunable universal formula $\bar{\mathbf{k}}_{\mathrm{Tun}}(\mathbf{x})$ provided
by~\eqref{PMN_Tunable_Universal_Formula}, with a Lipschitz continuous function $\kappa(\mathbf{x})\in\mathcal{K}_{\mathrm{SA}}$, 
\begin{equation}\label{Kappa_SA}
    \mathcal{K}_{\mathrm{SA}}=\{\kappa:\mathbb{R}^{n}\rightarrow\mathbb{R}| 0<\kappa(\mathbf{x})\leq 1\},
\end{equation}
ensures the safety of the closed-loop system~\eqref{Affine_Control_System}.
    \end{Thm}
\begin{proof}
    By substituting~\eqref{PMN_Tunable_Universal_Formula} into~\eqref{CBF_Condition}, it shows that the CBF condition~\eqref{CBF_Condition} is always satisfied for all $0\leq\kappa(\mathbf{x})\leq 1$. Next, as noted in Remark~\ref{Rmk_Strict}, we should exclude $\kappa(\mathbf{x})=0$ to ensure the Lipschitz continuity of CBF-based controllers, leading to the constraint $0 < \kappa(\mathbf{x}) \leq 1$. Finally, with Theorem~\ref{State_Feedback_Safety}, we conclude that the closed-loop system~\eqref{Affine_Control_System} (with $\bar{\mathbf{k}}_{\mathrm{Tun}}(\mathbf{x})$) is safe.
\end{proof}
\subsubsection{Smoothness}\label{Tunable_Uni_Smooth}
To ensure a proper choice of $\kappa(\mathbf{x})$ that leads to the tunable universal formula $\bar{\mathbf{k}}_{\mathrm{Tun}}(\mathbf{x})$ being a smooth control law, we first need to eliminate the influence of the function $\operatorname{ReLU}(\cdot)$ as appeared in~\eqref{PMN_Tunable_Function}. To achieve this, we choose a function $\iota$ that satisfies $\iota> c/\sqrt{c^2+s(d)d}$, which gives $-c+\iota\sqrt{c^2+s(d)d}> 0$. This choice simplifies the function $\bar{\lambda}_{\mathrm{Tun}}(c,d,\iota)$ (defined in~\eqref{PMN_Tunable_Function}) to 
\begin{equation}\label{Tunable_Function}
\begin{split}
      \lambda_{\mathrm{Tun}}(c,d,\iota):=\left\{\begin{array}{ll}
      0,&\text{if}\,\, d= 0,\\
       \frac{-c+\iota\sqrt{c^2+s(d)d}}{d},& \text{if}\,\, d\neq 0.       
\end{array}\right.  
\end{split}
\end{equation}
For the tunable universal formula $\bar{\mathbf{k}}_{\mathrm{Tun}}(\mathbf{x})$, the above discussion essentially requires that $\kappa(\mathbf{x}) > \frac{c(\mathbf{x})}{\Gamma_{\mathrm{Stg}}(\mathbf{x})}$. Moreover, since the smoothness discussions relate to a safety-critical controller, we need to incorporate the condition on $\kappa(\mathbf{x})$ in Theorem~\ref{Safety_Guarantee_Thm}, specifically $0 < \kappa(\mathbf{x}) \leq 1$, which results in
\begin{equation}\label{Tunable_Universal_Formula}
    \begin{split}
      \mathbf{k}_{\mathrm{Tun}}(\mathbf{x})=\lambda_{\mathrm{Tun}}(c(\mathbf{x}),\|\mathbf{d}( \mathbf{x}) \|^{2},\kappa(\mathbf{x})) \mathbf{d}(\mathbf{x})^{\top},    
    \end{split}
\end{equation}
where $\kappa(\mathbf{x})$ is a tunable term drawn from the set $\mathcal{K}_{\mathrm{SM}}$, which is defined as follows:
\begin{equation}\label{Kappa_SM}
    \mathcal{K}_{\mathrm{SM}}=\{\kappa:\mathbb{R}^{n}\rightarrow\mathbb{R}|\max\left(\frac{c(\mathbf{x})}{\Gamma_{\mathrm{Stg}}(\mathbf{x})},0\right)<\kappa(\mathbf{x})\leq 1\}
\end{equation}
\begin{Rmk}\label{safe_filter_tun}
     Note that one can also adapt the tunable universal formula~\eqref{Tunable_Universal_Formula} to a safety filter framework~\cite{ames2019control}, which aims to make a minimal modification to a predefined nominal control $\mathbf{k}_{\mathrm{d}}(\mathbf{x})$. In this case, the cost function in~\eqref{PMN_Safety} is adjusted to $\frac{1}{2}\|\mathbf{u}-\mathbf{k}_{\mathrm{d}}\|^{2}$. By following the same derivation routine for obtaining~\eqref{Tunable_Universal_Formula}, the tunable universal formula for a safety filter is acquired, which is expressed as: 
    \begin{equation}\label{SF_Tunable Universal_Formula}
    \begin{split}
\mathbf{k}_{\mathrm{SF}}(\mathbf{x})=\lambda_{\mathrm{Tun}}(\bar{c}(\mathbf{x}),\|\mathbf{d}( \mathbf{x}) \|^{2},\kappa(\mathbf{x})) \mathbf{d}(\mathbf{x})^{\top}+\mathbf{k}_{\mathrm{d}}, 
    \end{split}
\end{equation}
where $\kappa(\mathbf{x})\in\mathcal{K}_{\mathrm{SF}}$, $\mathcal{K}_{\mathrm{SF}}=\{\kappa:\mathbb{R}^{n}\rightarrow\mathbb{R}|
\max(\frac{\bar{c}(\mathbf{x})}{\bar{\Gamma}_{\mathrm{Stg}}(\mathbf{x})},0)<\kappa(\mathbf{x})\leq 1\}$, $\bar{c}(\mathbf{x})=c(\mathbf{x})+\mathbf{d}(\mathbf{x})\mathbf{k}_{\mathrm{d}}$, and $\bar{\Gamma}_{\mathrm{Stg}}(\mathbf{x})$ can be obtained by replacing $c(\mathbf{x})$ in~\eqref{Gamma_Choice} with $\bar{c}(\mathbf{x})$.
\end{Rmk}

The above discussion does not provide us a constructive approach for designing a controller with smoothness guarantees. To address this, we first define an open subset $\Phi=\{(c,d)\in\mathbb{R}^{2}|c> 0 \,\,\text{or} \,\, d>0 \}$ and introduce the following lemma.
\begin{Lem}\label{Implicit_Theorem}
Assume that $\zeta(c,d)$ is smooth on $\Phi$ and let $\iota(c,d)=(1-\zeta(c,d))\cdot\frac{c}{\sqrt{c^{2}+s(d)d}}+\zeta(c,d)$. The function $\lambda_{\mathrm{Tun}}(c,d,\iota(c,d))$ defined in~\eqref{Tunable_Function}  
is smooth on $\Phi$.
\end{Lem}
\begin{proof}
Firstly, we substitute $\iota(c,d)=(1-\zeta(c,d))\cdot\frac{c}{\sqrt{c^{2}+s(d)d}}+\zeta(c,d)$ into~\eqref{Tunable_Function}, which gives 
\begin{equation}\label{Tunable_Function_Modify}
\begin{split}
      \lambda_{\mathrm{Tun}}(c,d,\iota(c,d))=\zeta(c,d)\cdot\lambda_{\mathrm{Stg}}(c, d).
\end{split}
\end{equation}
Note that one can prove that $\lambda_{\mathrm{Stg}}(c, d)$ is smooth on $\Phi$ by following the proof in~\cite[Proposition 5.9.10]{sontag2013mathematical}. Furthermore, due to  $\zeta(c,d)$ being a smooth function on $\Phi$ by assumption, we can infer from~\eqref{Tunable_Function_Modify} that $\lambda_{\mathrm{Tun}}(c,d,\iota(c,d))$ is also smooth on $\Phi$.
\end{proof}

Generally, there are various choices of $\iota(c,d)$ to ensure the smoothness of the function $\lambda_{\mathrm{Tun}}(c,d,\iota(c,d))$ given in~\eqref{Tunable_Universal_Formula}. For instance, setting $\iota(c,d)=1$ and $\iota(c,d)=0.5\cdot\frac{c}{\sqrt{c^{2}+s(d)d}}+0.5$, one can easily verify that the function $\lambda_{\mathrm{Tun}}(c,d,\iota(c,d))$ is smooth by following the proof in~\cite[Proposition 5.9.10]{sontag2013mathematical}. As for Lemma~\ref{Implicit_Theorem}, it provides a family of feasible $\iota(c,d)$ that ensures the smoothness of the function $\lambda_{\mathrm{Tun}}(c,d,\iota(c,d))$. 
\begin{Thm}\label{Property_for_U_with_Kappa}
Assume that $h:\mathbb{R}^{n}\rightarrow\mathbb{R}$ is a CBF and $\eta:\mathbb{R}^{2}\rightarrow\mathbb{R}$ is a smooth function on $\Phi$.
For the tunable universal formula provided in~\eqref{Tunable_Universal_Formula}, if one constructs $\kappa(\mathbf{x})$ using $\kappa(\mathbf{x})=(1-\eta)\cdot\frac{c(\mathbf{x})}{\Gamma_{\mathrm{Stg}}(\mathbf{x})}+\eta$ and ensures $\kappa(\mathbf{x})\in\mathcal{K}_{\mathrm{SM}}$, then the resulting control law is smooth and ensures the closed-loop system~\eqref{Affine_Control_System} is safe $\forall\mathbf{x}\in\mathbb{R}^{n}$.
\end{Thm}
\begin{proof}
Firstly, due to the condition $\kappa(\mathbf{x})\in\mathcal{K}_{\mathrm{SM}}$ and Theorem~\ref{Safety_Guarantee_Thm}, we can deduce that the closed-loop system~\eqref{Affine_Control_System} is safe with a control law defined by a tunable universal formula. Next, according to~\eqref{CBF_Condition} with a strict inequality condition, we know that $c(\mathbf{x})> 0$ when $\mathbf{d}(\mathbf{x})=\mathbf{0}$, and  $\|\mathbf{d}(\mathbf{x})\|^{2}>\mathbf{0}$ for $\mathbf{d}(\mathbf{x})\neq\mathbf{0}$. Consequently, we can verify that $(c(\mathbf{x}),\|\mathbf{d}(\mathbf{x})\|^2)\in\Phi$. Note that  $\eta(c(\mathbf{x}),\|\mathbf{d}(\mathbf{x})\|^{2})$ is a smooth function on $\Phi$ by assumption.\footnote{To simplify notation, we denote $\eta(c(\mathbf{x}),\|\mathbf{d}(\mathbf{x})\|^{2})$ by $\eta$ thereafter.} Then, due to $\kappa(\mathbf{x})=(1-\eta)\cdot\frac{c(\mathbf{x})}{\Gamma_{\mathrm{Stg}}(\mathbf{x})}+\eta$, one can prove that the function $\lambda_{\mathrm{Tun}}(c(\mathbf{x}),\|\mathbf{d}( \mathbf{x}) \|^{2},\kappa(\mathbf{x}))$ is smooth on $\Phi$ according to Lemma~\ref{Implicit_Theorem}. Therefore, the tunable universal formula given in~\eqref{Tunable_Universal_Formula} is smooth on $\Phi$, and hence smooth $\forall\mathbf{x}\in\mathbb{R}^{n}$.
\end{proof}

\begin{Rmk}\label{eta_range}
In Theorem~\ref{Property_for_U_with_Kappa}, one should note that the condition $\kappa(\mathbf{x})\in\mathcal{K}_{\mathrm{SM}}$ necessitates the smooth function $\eta\in\Xi$, where $\Xi:=\{\eta:\Phi\rightarrow\mathbb{R}|\max\left(\frac{c(\mathbf{x})}{c(\mathbf{x})-\Gamma_{\mathrm{Stg}}(\mathbf{x})},0\right)<\eta\leq 1\}$. Therefore, Theorem~\ref{Property_for_U_with_Kappa} suggests that we should identify a smooth function $\eta$ such that $\eta \in \Xi$, and then construct the tunable term $\kappa(\mathbf{x})$ using the expression $\kappa(\mathbf{x})=(1-\eta)\cdot\frac{c(\mathbf{x})}{\Gamma_{\mathrm{Stg}}(\mathbf{x})}+\eta$ to construct the tunable term $\kappa(\mathbf{x})$. For example, choosing $\eta = \frac{1}{2}$ or $\eta = 1$ yields $\kappa(\mathbf{x}) = \frac{c(\mathbf{x}) + \Gamma_{\mathrm{Stg}}(\mathbf{x})}{2\Gamma_{\mathrm{Stg}}(\mathbf{x})}$ and $\kappa(\mathbf{x}) = 1$, which correspond to the Half-Sontag formula~\cite{cohen2023characterizing} and Sontag’s formula~\cite{Sontag_fromula}, respectively.
\end{Rmk}

Although Remark~\ref{eta_range} provides the valid range for $\eta$ and offers some examples, selecting a valid $\eta$ is not an easy task due to its state dependence. The following theorem provides a state-independent (yet more conservative) range for $\eta$, which ensures both smoothness and safety simultaneously.
\begin{Thm}\label{Eta_Constant_Choice}
Let $h:\mathbb{R}^{n}\to\mathbb{R}$ be a CBF, and let
$\eta:\mathbb{R}^{2}\to\mathbb{R}$ be smooth on the domain $\Phi$ with
$0.5\leq\eta(\mathbf{x})\leq 1$ for every $\mathbf{x}\in\Phi$.  Use $\kappa(\mathbf{x})
    \;=\;
    \bigl(1-\eta\bigr)\cdot
    \frac{c(\mathbf{x})}{\Gamma_{\mathrm{Stg}}(\mathbf{x})}
    +\eta$ to derive the tunable universal formula~\eqref{Tunable_Universal_Formula}. Then the resulting control law is smooth and the closed-loop system~\eqref{Affine_Control_System} is safe for all states $\mathbf{x}\in\mathbb{R}^{n}$.
\end{Thm}
\begin{proof}
As stated in Remark~\ref{eta_range}, if $\eta\in\Xi=\{\eta:\Phi\rightarrow\mathbb{R}|\max(\frac{c(\mathbf{x})}{c(\mathbf{x})-\Gamma_{\mathrm{Stg}}(\mathbf{x})},0)<\eta\leq 1\}$, then we can obtain $\kappa(\mathbf{x})\in\mathcal{K}_{\mathrm{SM}}$. Due to that $0.5\leq \eta(\mathbf{x})\leq 1 $ always implies $\eta \in \Xi$, Theorem~\ref{Property_for_U_with_Kappa} ensures that the resulting control law is smooth and that the closed‑loop system~\eqref{Affine_Control_System} remains safe for every state $\mathbf{x}\in\mathbb{R}^{n}$.
\end{proof}

In the following proposition, we will demonstrate that the tunable universal formula is never overly conservative (like Sontag's formula), as it can achieve an arbitrarily close approximation to the QP-synthesized controller through parameter tuning.
\begin{Prop}\label{Perfect_Approximation}
Let $h:\mathbb{R}^{n}\to\mathbb{R}$ be a CBF.  
Fix the smooth function $\eta$ to be a constant value $\eta\;=\; 0.5$.
Define $\kappa(\mathbf{x})
  \;=\;
  \bigl(1-\eta\bigr)\,
  \frac{c(\mathbf{x})}{\Gamma_{\mathrm{Stg}}(\mathbf{x})}
  \;+\;
  \eta$ and construct the tunable universal formula $\mathbf{k}_{\text {Tun }}(\mathbf{x})$ as in~\eqref{Tunable_Universal_Formula}. If the function \(s:\mathbb{R}\!\to\!\mathbb{R}\) in \eqref{Gamma_Choice} is chosen as \(s(d)=\sigma d\), with \(\sigma\to 0\), then one can obtain $\mathbf{k}_{\mathrm{Tun}}(\mathbf{x}) \;\rightarrow\; \mathbf{k}_{\mathrm{QP}}(\mathbf{x})$.
\end{Prop}
\begin{proof}
Under the condition $\eta = 0.5$, we will obtain $\kappa(\mathbf{x}) = \frac{c(\mathbf{x}) + \Gamma_{\mathrm{Stg}}(\mathbf{x})}{2\Gamma_{\mathrm{Stg}}(\mathbf{x})}$ according to its definition. Then, substituting $\kappa(\mathbf{x})$ into the tunable universal formula $\mathbf{k}_{\mathrm{Tun}}(\mathbf{x})$ as defined by~\eqref{Tunable_Universal_Formula}, we obtain $\mathbf{k}_{\mathrm{Tun}}(\mathbf{x}) = \frac{1}{2} \mathbf{k}_{\mathrm{Stg}}(\mathbf{x})$. Next, we examine the definition of $\mathbf{k}_{\mathrm{Stg}}(\mathbf{x})$ provided in~\eqref{CBF_Universal_Law} and select $s(d) = \sigma d$ with $\sigma \to 0$. This choice simplifies the expression for $\mathbf{k}_{\mathrm{Stg}}(\mathbf{x})$, and it immediately gives us $\mathbf{k}_{\mathrm{Tun}}(\mathbf{x}) =  \frac{1}{2} \mathbf{k}_{\mathrm{Stg}}(\mathbf{x})\rightarrow\mathbf{k}_{\mathrm{QP}}(\mathbf{x})$. This result shows that, under the choice of $\eta = 0.5$ and $s(d) = \sigma d$ with $\sigma \rightarrow 0$, the tunable universal formula can approach the QP-synthesized controller arbitrarily closely.
\end{proof}
\begin{Rmk}
    Note that, in Proposition~\ref{Perfect_Approximation}, the Squareplus approximation~\cite{barron2021squareplus} under the joint limit of $\eta = 0.5$ and $s(d) = \sigma d$ with $\sigma \to 0$ is the primary reason why the tunable universal formula can perfectly approximate the QP-synthesized controller. Specifically, we start by setting \(\eta=0.5\) in the tunable universal formula $\mathbf{k}_{\mathrm{Tun}}(\mathbf{x})$, which reduces it to Half-Sontag’s formula (cf. Remark~\ref{eta_range}). Next, we introduce a one-parameter smooth ``Squareplus" approximation by defining $s(d) = \sigma d$ with $\sigma \to 0$. This choice replaces the sharp ReLU function in $\lambda_{\mathrm{PMN}}(c,d)$ (cf. Eq.\eqref{Relu_Function}) with the smoothed Half-Sontag–based function $\lambda_{\mathrm{Stg}}(c, d)$ (cf. Eq.\eqref{Stg_Function}). Consequently, we obtain $\mathbf{k}_{\mathrm{Tun}}(\mathbf{x}) = \frac{1}{2} \mathbf{k}_{\mathrm{Stg}}(\mathbf{x}) \rightarrow\mathbf{k}_{\mathrm{QP}}(\mathbf{x})$.
\end{Rmk}

\subsubsection{Robustness} 
While the term ``safety margin" is commonly used in existing literature, a precise definition has not been established. In~\cite{Safety_Margin}, the concept of stability margin~\cite{Universal_Formula_Gain_Margin_1} has been extended to safety-critical control. This extension still allows for a quantitative assessment of the robustness of a controller, similar to that of a stabilizing control. Therefore, we follow the definition of stability margin in~\cite{Stability_Margin} and adapt it to the definition of safety margin.
\begin{Def}\label{Stability_Margin}
    (Safety margin) A safety control law, $\mathbf{u}=\mathbf{k}(\mathbf{x})$, has safety margins $(m_{1},m_{2})$,
    \begin{equation*}
-1\leq m_1<m_2 \leq \infty,
\end{equation*}
if, for every constant $\xi \in [m_{1},m_{2})$, the control $\bar{\mathbf{u}}=(1+\xi)\mathbf{k}(\mathbf{x})$ also ensures the safety of the system.
\end{Def}
\begin{Thm}\label{Gain_Margin_Theorem}
The safety margin of the tunable universal formula presented in~\eqref{Tunable_Universal_Formula} is $[\bar{\xi},\infty)$, where $\bar{\xi}:=\underset{\mathbf{x}\in\mathbb{R}^{n}}{\sup}\mathcal{M}(\mathbf{x})$,
\begin{equation}\label{Bound_Function}
    \mathcal{M}(\mathbf{x})\triangleq -1+\frac{c(\mathbf{x})}{c(\mathbf{x})-\kappa(\mathbf{x})\Gamma_{\mathrm{Stg}}(\mathbf{x})},
\end{equation}
and $\kappa(\mathbf{x})\in\mathcal{K}_{\mathrm{SM}}$.
\end{Thm}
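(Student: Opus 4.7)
The plan is to apply Definition~\ref{Stability_Margin} directly. For a fixed constant $\xi$, the scaled input $\tilde{\mathbf{u}}(\mathbf{x})=(1+\xi)\mathbf{k}_{\mathrm{Tun}}(\mathbf{x})$ renders the system safe precisely when the CBF inequality $c(\mathbf{x})+(1+\xi)\mathbf{d}(\mathbf{x})\mathbf{k}_{\mathrm{Tun}}(\mathbf{x})\ge 0$ holds at every $\mathbf{x}\in\mathbb{R}^{n}$. Using the closed-form expression~\eqref{Tunable_Universal_Formula} together with~\eqref{Tunable_Function}, the key preliminary computation is $\mathbf{d}(\mathbf{x})\mathbf{k}_{\mathrm{Tun}}(\mathbf{x})=\lambda_{\mathrm{Tun}}(c(\mathbf{x}),\|\mathbf{d}(\mathbf{x})\|^{2},\kappa(\mathbf{x}))\|\mathbf{d}(\mathbf{x})\|^{2}$, which, after recognizing $\sqrt{c^{2}+\|\mathbf{d}\|^{4}}=\Gamma_{\mathrm{Stg}}(\mathbf{x})$, collapses to $-c(\mathbf{x})+\kappa(\mathbf{x})\Gamma_{\mathrm{Stg}}(\mathbf{x})$ for $\mathbf{d}(\mathbf{x})\neq\mathbf{0}$. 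Substituting back, the CBF residual becomes the affine function
\[
R(\xi,\mathbf{x})\triangleq\kappa(\mathbf{x})\Gamma_{\mathrm{Stg}}(\mathbf{x})+\xi\bigl(\kappa(\mathbf{x})\Gamma_{\mathrm{Stg}}(\mathbf{x})-c(\mathbf{x})\bigr),
\]
whose pointwise non-negativity is the central inequality driving the proof.

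Next I would split into cases. When $\mathbf{d}(\mathbf{x})=\mathbf{0}$, formula~\eqref{Tunable_Function} gives $\mathbf{k}_{\mathrm{Tun}}(\mathbf{x})=\mathbf{0}$, so the condition reduces to $c(\mathbf{x})\ge 0$, which follows from the CBF hypothesis (Definition~\ref{CBF_Def}) and is independent of $\xi$. When $\mathbf{d}(\mathbf{x})\neq\mathbf{0}$, the assumption $\kappa\in\mathcal{K}$ supplies $\kappa(\mathbf{x})\Gamma_{\mathrm{Stg}}(\mathbf{x})\ge 0$ and $\kappa(\mathbf{x})\Gamma_{\mathrm{Stg}}(\mathbf{x})-c(\mathbf{x})\ge 0$, so $R(\xi,\mathbf{x})$ is non-decreasing in $\xi$. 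Solving $R(\xi,\mathbf{x})\ge 0$ pointwise yields $\xi\ge -\kappa(\mathbf{x})\Gamma_{\mathrm{Stg}}(\mathbf{x})/(\kappa(\mathbf{x})\Gamma_{\mathrm{Stg}}(\mathbf{x})-c(\mathbf{x}))$ whenever the denominator is strictly positive; a short algebraic manipulation identifies this right-hand side with $\mathcal{M}(\mathbf{x})$ in~\eqref{Bound_Function}. At the remaining points, where $\kappa(\mathbf{x})\Gamma_{\mathrm{Stg}}(\mathbf{x})=c(\mathbf{x})$, the relation $R(\xi,\mathbf{x})=\kappa(\mathbf{x})\Gamma_{\mathrm{Stg}}(\mathbf{x})\ge 0$ holds trivially and imposes no restriction on $\xi$.

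Finally, requiring the condition to hold uniformly in $\mathbf{x}$ gives $\xi\ge \sup_{\mathbf{x}\in\mathbb{R}^{n}}\mathcal{M}(\mathbf{x})=\overline{\xi}$, which is the lower endpoint of the claimed safety margin. The monotonicity of $R(\xi,\mathbf{x})$ in $\xi$ immediately shows there is no upper restriction, yielding $m_{2}=\infty$. The main obstacle I anticipate is a clean treatment of the degenerate points (those with $\mathbf{d}(\mathbf{x})=\mathbf{0}$ or $\kappa(\mathbf{x})\Gamma_{\mathrm{Stg}}(\mathbf{x})=c(\mathbf{x})$) so that the formula~\eqref{Bound_Function} does not need to be interpreted as a limit, together with checking that the resulting $\overline{\xi}$ indeed lies in $[-1,0]$ as demanded by Definition~\ref{Stability_Margin}; the latter follows by noting that $\kappa\Gamma_{\mathrm{Stg}}\ge 0$ and $\kappa\Gamma_{\mathrm{Stg}}\ge c$ force $|c/(c-\kappa\Gamma_{\mathrm{Stg}})|\le 1$ when $c<0$, so $\mathcal{M}(\mathbf{x})\in[-1,0]$ at those $\mathbf{x}$, whereas points with $c\ge 0$ give $\mathcal{M}(\mathbf{x})\le -1$ and therefore cannot increase the supremum.
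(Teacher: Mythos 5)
Your proposal is correct and follows essentially the same route as the paper: substitute $(1+\xi)\mathbf{k}_{\mathrm{Tun}}$ into the CBF condition, use $\mathbf{d}(\mathbf{x})\mathbf{k}_{\mathrm{Tun}}(\mathbf{x})=\kappa(\mathbf{x})\Gamma_{\mathrm{Stg}}(\mathbf{x})-c(\mathbf{x})$ when $\mathbf{d}(\mathbf{x})\neq\mathbf{0}$, solve the resulting affine-in-$\xi$ inequality pointwise to obtain $\xi\geq\mathcal{M}(\mathbf{x})$, and take the supremum over $\mathbf{x}$. Your handling of the degenerate points where $\kappa(\mathbf{x})\Gamma_{\mathrm{Stg}}(\mathbf{x})=c(\mathbf{x})$ and your check that $\overline{\xi}\in[-1,0]$ are somewhat more careful than the paper's (the latter point is deferred there to a subsequent remark), but the core argument is identical.
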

\begin{proof}
Given that the tunable universal formula defined in~\eqref{Tunable_Universal_Formula} satisfies the tightened CBF condition presented in~\eqref{Scaling_Condition_kappa} with $\kappa(\mathbf{x})\in\mathcal{K}_{\mathrm{SM}}$, we substitute $\mathbf{k}_{\mathrm{Tun}}(\mathbf{x})$ into~\eqref{Scaling_Condition_kappa} and add $\xi\mathbf{d}(\mathbf{x})\mathbf{k}_{\mathrm{Tun}}(\mathbf{x})$ to both sides of~\eqref{Scaling_Condition_kappa} based on the safety margin definition. This yields:
    \begin{equation}\label{Stability_margin_proof}
        \begin{split}
            &c(\mathbf{x})+(1+\xi)\mathbf{d}(\mathbf{x}) \mathbf{k}_{\mathrm{Tun}}(\mathbf{x})\\
            &\qquad\qquad\qquad\geq \xi\mathbf{d}(\mathbf{x}) \mathbf{k}_{\mathrm{Tun}}(\mathbf{x})+\kappa(\mathbf{x})\Gamma_{\mathrm{Stg}}(\mathbf{x}).
        \end{split}
    \end{equation}
    
A sufficient condition for guaranteeing the safety of the system~\eqref{Affine_Control_System} is that the right-hand side of~\eqref{Stability_margin_proof} is non-negative for all $\mathbf{x}\in\mathbb{R}^{n}$, which leads to $\xi \geq \mathcal{M}(\mathbf{x})$.
According to the definition $\bar{\xi}:=\sup_{\mathbf{x}\in\mathbb{R}^{n}}\mathcal{M}(\mathbf{x})$, we know that $\mathbf{k}_{\mathrm{Tun}}(\mathbf{x})$ has a safety margin $[m_{1},m_{2})$ with $m_{1}=\bar{\xi}$ and $m_{2}=\infty$.
\end{proof}
\begin{Rmk}
The condition $\kappa(\mathbf{x})\in\mathcal{K}_{\mathrm{SM}}$ ensures that the function $\mathcal{M}(\mathbf{x})= 0$ in~\eqref{Bound_Function} is always satisfied. Thereby, it implies that there always exists a constant $\bar{\xi}=\sup_{\mathbf{x}\in\mathbb{R}^{n}}\mathcal{M}(\mathbf{x})= 0$. Further, we emphasize that the safety margin of the tunable universal formula can be more precisely determined with a specific $\kappa(\mathbf{x})$. For example, setting $\kappa(\mathbf{x})=1$ corresponds to Sontag's formula, leading to a safety margin $\bar{\xi}\in[-\frac{1}{2},\infty)$ due to the relationship $\bar{\xi}=\sup_{\mathbf{x}\in\mathbb{R}^{n}}\mathcal{M}(\mathbf{x})=\sup_{\mathbf{x}\in\mathbb{R}^{n}} \left(\frac{\Gamma_{\mathrm{Stg}}(\mathbf{x})}{c(\mathbf{x})-\Gamma_{\mathrm{Stg}}(\mathbf{x})}\right)=-\frac{1}{2}, \forall\mathbf{x}\in\mathbb{R}^{n}$. This conclusion aligns with the established condition for the stability margin of Sontag's formula, which can be found in~\cite{Safety_Margin}.
\end{Rmk}

\subsubsection{Discussions}
In Theorem~\ref{Safety_Guarantee_Thm}, we have established the conditions $\kappa(\mathbf{x})\in\mathcal{K}_{\mathrm{SA}}$ for the tunable universal formula $\bar{\mathbf{k}}_{\mathrm{Tun}}(\mathbf{x})$ to ensure the safety of the closed-loop system~\eqref{Affine_Control_System}, where $\bar{\mathbf{k}}_{\mathrm{Tun}}(\mathbf{x})$ is not necessarily assumed to be smooth. Then, we demonstrate that a family of smooth controllers (with safety guarantees) can be designed according to Theorem~\ref{Property_for_U_with_Kappa} with different $\kappa(\mathbf{x})\in\mathcal{K}_{\mathrm{SM}}$ (relating to different $\eta$). This suggests that both safety and smoothness properties can be achieved simultaneously by appropriately selecting $\kappa(\mathbf{x})$. However, comparing~\eqref{Kappa_SA} and~\eqref{Kappa_SM} shows that $\mathcal{K}_{\mathrm{SM}} \subset \mathcal{K}_{\mathrm{SA}}$, because $\mathcal{K}_{\mathrm{SM}}$ restricts the admissible choices of $\kappa(\mathbf{x})$ to satisfy both safety and smoothness. Proposition~\ref{Perfect_Approximation} shows that the tunable universal formula exactly reproduces the QP-synthesized controller when \(\eta = 0.5\) and \(\sigma = 0\), and, as noted in Remark~\ref{eta_range}, setting \(\eta = 1\) with \(\sigma \neq 0\) recovers Sontag’s formula. Therefore, the tunable universal formula unifies and generalizes both the QP-synthesized controller and Sontag’s formula.  Moreover, Theorems~\ref{Property_for_U_with_Kappa} and~\ref{Gain_Margin_Theorem} show that, by appropriately tuning \(\eta\) and \(\sigma\), the tunable universal formula improves the robustness and smoothness of the QP-based controller while simultaneously reducing the conservatism of Sontag’s formula.
Additionally, to evaluate the robustness of these safe and smooth controllers, one can refer to the conclusions in Theorem~\ref{Gain_Margin_Theorem}. As noticed in~\eqref{Bound_Function}, an increase in $\kappa(\mathbf{x})$ will lead to a large $\mathcal{M}(\mathbf{x})$, resulting in a larger safety margin, and hence the control law is more robust. However, a large $\kappa(\mathbf{x})$ will result in a more conservative safety-critical control because of~\eqref{Scaling_Condition_kappa} and by following the discussions in Remark~\ref{Rmk_Conservatism}. Therefore, there is an inherent trade-off between the robustness of a safe and smooth control law. In practice, the choice of a specific $\kappa(\mathbf{x})$, which leads to certain levels of robustness, relies on the specific demands of the application.
\subsection{Addressing Norm-Bounded Input Constraints}
In this subsection, we aim to show that the safety margin of the tunable universal formula~\eqref{Tunable_Universal_Formula} can be utilized to manage input constraints. Similar discussions were presented in our previous work~\cite{li2024unifying}, which focused on generalizing Lin-Sontag's formula for stabilizing control under norm-bounded input constraints. In this paper, we extend the analysis to explore how a general tunable universal formula can address norm-bounded input constraints in safety-critical control. While the techniques are similar, the insights into stabilizing control and safety-critical control differ significantly, and we minimize overlap between the two papers in what follows.

We consider the following control input constraint.
\begin{equation}\label{Control_Input_Limit}
    \|\mathbf{u}\|\leq \gamma,
\end{equation}
where $\gamma> 0$ is a constant. 
\begin{Def}\label{Compatibility_Def}
($\mathbf{x}$-Compatibility) The CBF condition is $\mathbf{x}$-compatible with the norm-bounded input constraint at $\mathbf{x}\in\mathbb{R}^{n}$ if there exists a $\mathbf{u}\in\mathbb{R}^{m}$ that satisfies~\eqref{CBF_Condition} and~\eqref{Control_Input_Limit} simultaneously. We say that the CBF and norm-bounded input constraints are compatible if the CBF is $\mathbf{x}$-compatible with the norm-bounded input constraint~\eqref{Control_Input_Limit} $\forall\mathbf{x} \in \mathcal{C}$.
\end{Def}
\begin{Asum}\label{Compatibility_Assum}
The CBF is compatible with the norm-bounded input constraint.
\end{Asum}

To ensure Assumption~\ref{Compatibility_Assum} is satisfied, it is essentially equivalent to satisfying the conditions in the following lemma.
\begin{Lem}\label{Compa_Lem}
    The CBF condition given by~\eqref{CBF_Condition} is compatible with the input constraint~\eqref{Control_Input_Limit} if and only if there exists $\mathbf{x}\in\mathcal{S}_{\mathrm{cpt}}:=\{\mathbf{x}\in\mathbb{R}^{n}:\gamma\|\mathbf{d}(\mathbf{x})\|\geq -c(\mathbf{x})\}$.
\end{Lem}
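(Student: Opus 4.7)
The plan is to recognize this as an elementary exercise in linear constraint feasibility: a control input $\mathbf{u}$ exists satisfying both $c(\mathbf{x})+\mathbf{d}(\mathbf{x})\mathbf{u}\geq 0$ and $\|\mathbf{u}\|\leq \gamma$ if and only if the largest value the linear form $\mathbf{d}(\mathbf{x})\mathbf{u}$ attains on the closed ball $\{\mathbf{u}:\|\mathbf{u}\|\leq\gamma\}$ is at least $-c(\mathbf{x})$. By the Cauchy--Schwarz inequality this maximum equals $\gamma\|\mathbf{d}(\mathbf{x})\|$, which is precisely the characterization of the set $\mathcal{S}_{\mathrm{cpt}}$.

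For the necessity direction I would fix $\mathbf{x}$ and assume there is some $\mathbf{u}$ with $\|\mathbf{u}\|\leq\gamma$ satisfying~\eqref{CBF_Condition}. Rearranging gives $-c(\mathbf{x})\leq\mathbf{d}(\mathbf{x})\mathbf{u}$, and a single application of Cauchy--Schwarz yields $\mathbf{d}(\mathbf{x})\mathbf{u}\leq\|\mathbf{d}(\mathbf{x})\|\|\mathbf{u}\|\leq\gamma\|\mathbf{d}(\mathbf{x})\|$, so $\mathbf{x}\in\mathcal{S}_{\mathrm{cpt}}$.

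For sufficiency I would split into cases. If $\mathbf{d}(\mathbf{x})\neq\mathbf{0}$, the explicit witness is the maximizer of the linear form, namely $\mathbf{u}^\star=\gamma\,\mathbf{d}(\mathbf{x})^{\top}/\|\mathbf{d}(\mathbf{x})\|$, which is feasible for~\eqref{Control_Input_Limit} since $\|\mathbf{u}^\star\|=\gamma$, and which gives $\mathbf{d}(\mathbf{x})\mathbf{u}^\star=\gamma\|\mathbf{d}(\mathbf{x})\|\geq -c(\mathbf{x})$ by the hypothesis $\mathbf{x}\in\mathcal{S}_{\mathrm{cpt}}$. If instead $\mathbf{d}(\mathbf{x})=\mathbf{0}$, the defining inequality of $\mathcal{S}_{\mathrm{cpt}}$ reduces to $c(\mathbf{x})\geq 0$, and then $\mathbf{u}=\mathbf{0}$ trivially satisfies both constraints.

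I do not anticipate any genuine obstacle; the only subtlety is making sure the zero-$\mathbf{d}$ case is handled cleanly, since the explicit unit-vector choice in the witness becomes undefined there, and the argument has to fall back on the degenerate observation that the CBF constraint collapses to $c(\mathbf{x})\geq 0$.
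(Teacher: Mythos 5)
Your proof is correct, and it takes a somewhat different route from the paper's. The paper's argument reduces compatibility to the statement that the pointwise min-norm control satisfying the CBF constraint lies in the input ball, i.e.\ $\|\mathbf{k}_{\mathrm{QP}}(\mathbf{x})\|\leq\gamma$, and then substitutes the explicit formula~\eqref{QP_Synthesized} from Lemma~\ref{PMN_Solution} to extract the inequality $\gamma\|\mathbf{d}(\mathbf{x})\|\geq -c(\mathbf{x})$. You instead argue from the dual side: you compute $\sup_{\|\mathbf{u}\|\leq\gamma}\mathbf{d}(\mathbf{x})\mathbf{u}=\gamma\|\mathbf{d}(\mathbf{x})\|$ via Cauchy--Schwarz, observe that feasibility of the two constraints is precisely this supremum being at least $-c(\mathbf{x})$, and exhibit the maximizer $\gamma\,\mathbf{d}(\mathbf{x})^{\top}/\|\mathbf{d}(\mathbf{x})\|$ as an explicit witness (with the degenerate case $\mathbf{d}(\mathbf{x})=\mathbf{0}$ handled separately). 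Both arguments are correct and rest on the same underlying geometry of a half-space intersecting a ball; the paper's version is shorter because it leans on the already-established closed-form QP solution, while yours is self-contained and does not require invoking Lemma~\ref{PMN_Solution} at all. Your version also makes the witness explicit, which the paper leaves implicit, and your careful treatment of the $\mathbf{d}(\mathbf{x})=\mathbf{0}$ case is a detail the paper glosses over.
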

\begin{proof}
Suppose there exists a state-feedback control law $\mathbf{k}(\mathbf{x})$ satisfying the CBF condition~\eqref{CBF_Condition}. To ensure that the CBF condition~\eqref{CBF_Condition} is $\mathbf{x}$-compatible with the norm-bounded input constraint, a necessary and sufficient condition is that there always exists a control law $\mathbf{k}(\mathbf{x})$ such that $\|\mathbf{k}(\mathbf{x})\|\leq\gamma$. In other words, we need guarantee that the norm of the control law $\mathbf{k}(\mathbf{x}):=\mathrm{argmin}_{\mathbf{u}}\|\mathbf{u}\|,\mathrm{s.t.} c(\mathbf{x})+\mathbf{d}(\mathbf{x})\mathbf{u}\geq 0$ is less than $\gamma$. Notably, this is equivalent to requiring $\|\mathbf{k}_{\mathrm{QP}}(\mathbf{x})\|\leq\gamma$.  By substituting~\eqref{QP_Synthesized} into the inequality $\|\mathbf{k}_{\mathrm{QP}}(\mathbf{x})\|\leq\gamma$, the condition $\gamma\|\mathbf{d}(\mathbf{x})\|\geq -c(\mathbf{x})$ is obtained.
\end{proof}
\subsubsection{Safety}\label{Safety_Input_Constraint}
When applying the tunable universal formula $\mathbf{k}_{\mathrm{Tun}}(\mathbf{x})$ from~\eqref{Tunable_Universal_Formula} to safety-critical control with input constraints, the choices for selecting $\kappa(\mathbf{x})$ become more limited compared to the case without input constraints in Section~\ref{Tun_Without_Input}. This is because $\mathbf{k}_{\mathrm{Tun}}(\mathbf{x})$ must satisfy both the CBF condition and the input constraint simultaneously.

\begin{Thm}\label{Scaling_Universal_Formula}
Assume that $h:\mathbb{R}^{n}\rightarrow\mathbb{R}$ is a CBF and Assumption~\ref{Compatibility_Assum} holds. The following tunable universal formula law ensures the closed-loop system~\eqref{Affine_Control_System} is safe and satisfies the norm-bounded input constraint~\eqref{Control_Input_Limit} simultaneously:
\begin{equation}\label{QP_CLF_Solution_with_tuning}
    \begin{split}
    \widetilde{\mathbf{k}}_{\mathrm{Tun-BI}}(\mathbf{x})=\bar{\lambda}_{\mathrm{Tun}}(c(\mathbf{x}),\|\mathbf{d}( \mathbf{x}) \|^{2},\widetilde{\kappa}(\mathbf{x})) \mathbf{d}(\mathbf{x})^{\top}
    \end{split}
\end{equation}
where $\widetilde{\kappa}(\mathbf{x})\in\widetilde{\mathcal{K}}_{\mathrm{SA-BI}}$, $\widetilde{\mathcal{K}}_{\mathrm{SA-BI}}=\{\widetilde{\kappa}:\mathbb{R}^{n}\rightarrow\mathbb{R} |0<\widetilde{\kappa}(\mathbf{x})\leq\frac{\gamma\|\mathbf{d}(\mathbf{x})\|+c(\mathbf{x})}{\Gamma_{\mathrm{Stg}}(\mathbf{x})}\}$.
\end{Thm}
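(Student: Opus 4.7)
The plan is to establish two separate claims: first, that $\mathbf{k}_{\mathrm{Tun-BI}}$ renders the closed-loop system safe by satisfying the CBF condition, and second, that it satisfies the norm bound $\|\mathbf{u}\|\leq\gamma$ at every state. I expect both claims to follow directly from the two-sided specification of $\overline{\mathcal{K}}$: the lower bound delivers safety, while the upper bound delivers the input-norm constraint.

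For the safety claim, I would observe that the lower bound defining $\overline{\mathcal{K}}$ is identical to that defining $\mathcal{K}$ in \eqref{Tunable_Universal_Formula}, so $\overline{\mathcal{K}}\subseteq\mathcal{K}$. Because $\mathbf{k}_{\mathrm{Tun-BI}}$ has exactly the same functional form as $\mathbf{k}_{\mathrm{Tun}}$, Theorem \ref{Safety_Guarantee} applies verbatim and yields the CBF condition $c(\mathbf{x})+\mathbf{d}(\mathbf{x})\mathbf{k}_{\mathrm{Tun-BI}}(\mathbf{x})\geq 0$ for all $\mathbf{x}\in\mathbb{R}^{n}$, which is the standing requirement for forward invariance of $\mathcal{C}$.

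For the input-norm claim, I would case-split on whether $\mathbf{d}(\mathbf{x})$ vanishes. If $\mathbf{d}(\mathbf{x})=\mathbf{0}$, the first branch of \eqref{Tunable_Function} gives $\mathbf{k}_{\mathrm{Tun-BI}}(\mathbf{x})=\mathbf{0}$ and the bound is immediate. Otherwise, substituting the nonzero branch of \eqref{Tunable_Function} into \eqref{QP_CLF_Solution_with_tuning} and using $\sqrt{c^{2}+\|\mathbf{d}\|^{4}}=\Gamma_{\mathrm{Stg}}$, the controller norm reduces to
$$\|\mathbf{k}_{\mathrm{Tun-BI}}(\mathbf{x})\|=\frac{-c(\mathbf{x})+\kappa(\mathbf{x})\Gamma_{\mathrm{Stg}}(\mathbf{x})}{\|\mathbf{d}(\mathbf{x})\|},$$
where the lower bound in $\overline{\mathcal{K}}$ combined with the elementary inequality $\Gamma_{\mathrm{Stg}}\geq|c|$ ensures the numerator is nonnegative, so no absolute value is needed. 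The upper bound $\kappa(\mathbf{x})\leq(\gamma\|\mathbf{d}(\mathbf{x})\|+c(\mathbf{x}))/\Gamma_{\mathrm{Stg}}(\mathbf{x})$ then rearranges to $-c(\mathbf{x})+\kappa(\mathbf{x})\Gamma_{\mathrm{Stg}}(\mathbf{x})\leq\gamma\|\mathbf{d}(\mathbf{x})\|$, and dividing by $\|\mathbf{d}(\mathbf{x})\|>0$ completes the argument.

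The main technical point I would secure is the pointwise nonemptiness of $\overline{\mathcal{K}}$, without which the theorem would be vacuous. The interval defining $\overline{\mathcal{K}}$ is nonempty iff $\gamma\|\mathbf{d}(\mathbf{x})\|+c(\mathbf{x})\geq\max(c(\mathbf{x}),0)$, which collapses precisely to $\gamma\|\mathbf{d}(\mathbf{x})\|\geq -c(\mathbf{x})$, i.e., the $\mathbf{x}$-compatibility condition from Lemma \ref{Compa_Lem}. Hence Assumption \ref{Compatibility_Assum} is the exact hypothesis that makes the admissible tuning set nonempty, and invoking it at the outset disposes of this concern and makes the rest of the argument routine.
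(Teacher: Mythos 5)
Your proof is correct and follows essentially the same route as the paper: invoke Assumption~\ref{Compatibility_Assum} (via Lemma~\ref{Compa_Lem}) to secure nonemptiness of $\overline{\mathcal{K}}$, verify the CBF condition, and then verify the norm bound by case-splitting on $\mathbf{d}(\mathbf{x})$. The only cosmetic differences are that you derive the CBF condition by citing Theorem~\ref{Safety_Guarantee} through the inclusion $\overline{\mathcal{K}}\subseteq\mathcal{K}$ rather than repeating the direct substitution, and that your appeal to $\Gamma_{\mathrm{Stg}}\geq|c|$ is superfluous — the lower bound $\kappa\geq\max(c/\Gamma_{\mathrm{Stg}},0)$ alone already gives $\kappa\Gamma_{\mathrm{Stg}}\geq c$ and hence a nonnegative numerator.
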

\begin{proof}
Due to Assumption~\ref{Compatibility_Assum}, we have $\gamma\|\mathbf{d}(\mathbf{x})\|\geq -c(\mathbf{x})$ using Lemma~\ref{Compa_Lem}. Therefore, we can always ensure that $\widetilde{\mathcal{K}}_{\mathrm{SA-BI}}\neq\emptyset$. Next, we examine whether $\widetilde{\mathbf{k}}_{\mathrm{Tun-BI}}(\mathbf{x})$ satisfies the CBF condition~\eqref{CBF_Condition}. Firstly, substituting $\mathbf{d}(\mathbf{x})=\mathbf{0}$ into~\eqref{QP_CLF_Solution_with_tuning}, it gives $\widetilde{\mathbf{k}}_{\mathrm{Tun-BI}}(\mathbf{x})=\mathbf{0}$. Given that $c(\mathbf{x})>0$ when $\mathbf{d}(\mathbf{x})=\mathbf{0}$, then it can be verified that the CBF condition is satisfied by substituting $\widetilde{\mathbf{k}}_{\mathrm{Tun-BI}}(\mathbf{x})=\mathbf{0}$ into~\eqref{CBF_Condition}. For the case  $\mathbf{d}(\mathbf{x})\neq\mathbf{0}$, we know that $\widetilde{\mathbf{k}}_{\mathrm{Tun-BI}}(\mathbf{x})=\frac{\widetilde{\kappa}(\mathbf{x})\Gamma_{\mathrm{Stg}}(\mathbf{x})-c(\mathbf{x})}{\|\mathbf{d}(\mathbf{x})\|^{2}} \mathbf{d}(\mathbf{x})^{\top}$. Substituting $\widetilde{\mathbf{k}}_{\mathrm{Tun-BI}}(\mathbf{x})$ into~\eqref{CBF_Condition} gives 
\begin{equation}\nonumber
    c(\mathbf{x})+\mathbf{d}(\mathbf{x}) \cdot \frac{\widetilde{\kappa}(\mathbf{x})\Gamma_{\mathrm{Stg}}(\mathbf{x})-c(\mathbf{x})}{\|\mathbf{d}(\mathbf{x})\|^{2}} \mathbf{d}(\mathbf{x})^{\top}=\widetilde{\kappa}(\mathbf{x})\Gamma_{\mathrm{Stg}}(\mathbf{x})\geq 0.
\end{equation}
Finally, we need to verify that $\widetilde{\mathbf{k}}_{\mathrm{Tun-BI}}(\mathbf{x})$ remains within the specified control input range defined by $\|\mathbf{u}\|\leq\gamma$. When $\mathbf{d}(\mathbf{x})=\mathbf{0}$, the control law $\widetilde{\mathbf{k}}_{\mathrm{Tun-BI}}(\mathbf{x})=\mathbf{0}$ always satisfies the input constraint. For the case $\mathbf{d}(\mathbf{x})\neq\mathbf{0}$, we notice that    
\begin{equation}\nonumber
   \|\widetilde{\mathbf{k}}_{\mathrm{Tun-BI}}(\mathbf{x})\|=\left\|\frac{-c(\mathbf{x})+\widetilde{\kappa}(\mathbf{x})\Gamma_{\mathrm{Stg}}(\mathbf{x})}{\|\mathbf{d}(\mathbf{x})\|^{2}} \mathbf{d}(\mathbf{x})^{\top}\right\|\leq\gamma
\end{equation}
given the condition $\widetilde{\kappa}(\mathbf{x})\in\widetilde{\mathcal{K}}_{\mathrm{SA-BI}}$. Therefore, we conclude that the closed-loop system~\eqref{Affine_Control_System} is safe and simultaneously satisfies the input constraint~\eqref{Control_Input_Limit}.
\end{proof}

\subsubsection{Smoothness}
In the following theorem, we provide conditions to guarantee the controller defined by~\eqref{QP_CLF_Solution_with_tuning} is smooth.
\begin{Thm}\label{Smoothness}
Suppose that $h:\mathbb{R}^{n}\rightarrow\mathbb{R}$ is a CBF, Assumption~\ref{Compatibility_Assum} holds, and $\widetilde{\eta}:\mathbb{R}^{2}\rightarrow\mathbb{R}$ is a smooth function on $\Phi$. For the tunable universal formula provided in~\eqref{QP_CLF_Solution_with_tuning}, we choose $\widetilde{\kappa}(\mathbf{x})=(1-\widetilde{\eta})\cdot\frac{c(\mathbf{x})}{\Gamma_{\mathrm{Stg}}(\mathbf{x})}+\widetilde{\eta}$ and ensure $\widetilde{\kappa}(\mathbf{x})\in\widetilde{\mathcal{K}}_{\mathrm{SM-BI}}$, where $\widetilde{\mathcal{K}}_{\mathrm{SM-BI}}:=\left\{\widetilde{\kappa}:\mathbb{R}^{n}\rightarrow\mathbb{R}|\max\left(\frac{c(\mathbf{x})}{\Gamma_{\mathrm{Stg}}(\mathbf{x})},0\right)< \widetilde{\kappa}(\mathbf{x})\leq\frac{\gamma\|\mathbf{d}(\mathbf{x})\|+c(\mathbf{x})}{\Gamma_{\mathrm{Stg}}(\mathbf{x})}\right\}$. Then the resulting control law is smooth and ensures the closed-loop system~\eqref{Affine_Control_System} is safe for all $\mathbf{x}\in\mathbb{R}^{n}$.
\end{Thm}
\begin{proof}
One can follow the proof given in Theorem~\ref{Property_for_U_with_Kappa} to verify the smoothness of $\widetilde{\mathbf{k}}_{\mathrm{Tun-BI}}(\mathbf{x})$.
\end{proof}
\begin{Rmk}
    Note that ensuring $\widetilde{\kappa}(\mathbf{x}) \in \widetilde{\mathcal{K}}_{\mathrm{SM-BI}}$ requires constructing a smooth function $\widetilde{\eta}$ such that $\widetilde{\eta} \in \widetilde{\Xi}$, where $\widetilde{\Xi}:=\left\{\widetilde{\eta}:\Phi\rightarrow\mathbb{R}|\max\left(\frac{c(\mathbf{x})}{c(\mathbf{x})-\Gamma_{\mathrm{Stg}}(\mathbf{x})},0\right)<\widetilde{\eta}\leq \frac{\gamma\|\mathbf{d}(\mathbf{x})\|}{\Gamma_{\mathrm{Stg}}(\mathbf{x})-c(\mathbf{x})}\right\}$. A feasible choice for $\widetilde{\eta}$ is $\widetilde{\eta} = \frac{1}{\sqrt{s(\|\mathbf{d}(\mathbf{x}) \|^{2})/\gamma^2+1}+1}$. With this choice, the tunable universal formula reduces to the Lin-Sontag formula presented in~\cite{Norm_Bounded} for safety-critical control. Alternative feasible choice can also be found, such as $\tilde{\eta}(\mathbf{x}):=\frac{\gamma\|\mathbf{d}(\mathbf{x})\|-c(\mathbf{x})}{2\left(\Gamma_{\mathrm{Stg}}(\mathbf{x})-c(\mathbf{x})\right)}$.
\end{Rmk}

\subsubsection{Robustness} 
We continue to use Definition~\ref{Stability_Margin} and Theorem~\ref{Gain_Margin_Theorem} to assess the safety margin associated with the tunable universal formula given in~\eqref{QP_CLF_Solution_with_tuning}. However, for~\eqref{QP_CLF_Solution_with_tuning}, we rely on $\widetilde{\kappa}(\mathbf{x})$ for defining $\mathcal{M}(\mathbf{x})$ (see Equation~\eqref{Bound_Function}). With the following theorem, we determine that $\bar{\xi}=0$. 
\begin{Thm}\label{Gain_Margin_Theorem_Input}
    The controller defined by the tunable universal formula given in~\eqref{QP_CLF_Solution_with_tuning} has a safety margin $\bar{\xi}\in[0,\infty)$.
\end{Thm}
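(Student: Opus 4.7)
The plan is to reuse the argument scheme of Theorem~\ref{Gain_Margin_Theorem} and then exploit the fact that $\overline{\mathcal{K}}\subset\mathcal{K}$ to conclude that the safety margin obtained there specialises to an interval containing $[0,\infty)$. First, I would observe that by the very construction of $\mathbf{k}_{\mathrm{Tun-BI}}(\mathbf{x})$, the tightened CBF condition
\begin{equation*}
c(\mathbf{x})+\mathbf{d}(\mathbf{x})\mathbf{k}_{\mathrm{Tun-BI}}(\mathbf{x})\geq \kappa(\mathbf{x})\Gamma_{\mathrm{Stg}}(\mathbf{x})
\end{equation*}
holds for every $\mathbf{x}\in\mathbb{R}^{n}$ (this was already checked inside the proof of Theorem~\ref{Scaling_Universal_Formula}). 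I would then substitute $\tilde{\mathbf{u}}=(1+\xi)\mathbf{k}_{\mathrm{Tun-BI}}(\mathbf{x})$ into~\eqref{CBF_Condition} and rearrange exactly as in~\eqref{Stability_margin_proof}, so that safety reduces to showing that the residual $\xi\mathbf{d}(\mathbf{x})\mathbf{k}_{\mathrm{Tun-BI}}(\mathbf{x})+\kappa(\mathbf{x})\Gamma_{\mathrm{Stg}}(\mathbf{x})$ is non-negative for every $\mathbf{x}$.

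Next I would split on the sign of $\mathbf{d}(\mathbf{x})$. When $\mathbf{d}(\mathbf{x})=\mathbf{0}$ the residual equals $\kappa(\mathbf{x})\Gamma_{\mathrm{Stg}}(\mathbf{x})\geq 0$, which is immediate. When $\mathbf{d}(\mathbf{x})\neq\mathbf{0}$, a direct computation (identical to~\eqref{Margin_Relation}) reduces the residual to $(1+\xi)\kappa(\mathbf{x})\Gamma_{\mathrm{Stg}}(\mathbf{x})-\xi c(\mathbf{x})$. The key quantitative step is then to use $\kappa(\mathbf{x})\in\overline{\mathcal{K}}$, which in particular forces $\kappa(\mathbf{x})\Gamma_{\mathrm{Stg}}(\mathbf{x})\geq\max\{c(\mathbf{x}),0\}$. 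I would finish this part by a two-case argument on the sign of $c(\mathbf{x})$: if $c(\mathbf{x})\leq 0$ then $\xi c(\mathbf{x})\leq 0\leq (1+\xi)\kappa(\mathbf{x})\Gamma_{\mathrm{Stg}}(\mathbf{x})$ for any $\xi\geq 0$; if $c(\mathbf{x})>0$ then $(1+\xi)\kappa(\mathbf{x})\Gamma_{\mathrm{Stg}}(\mathbf{x})\geq (1+\xi)c(\mathbf{x})\geq \xi c(\mathbf{x})$. In either case, any $\xi\geq 0$ works, so by Definition~\ref{Stability_Margin} the safety margin includes $[0,\infty)$.

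An equivalent shortcut would be to invoke Theorem~\ref{Gain_Margin_Theorem} directly: since $\overline{\mathcal{K}}\subseteq\mathcal{K}$, the bound $\overline{\xi}=\sup_{\mathbf{x}}\mathcal{M}(\mathbf{x})\leq 0$ from the remark following Theorem~\ref{Gain_Margin_Theorem} applies verbatim, and therefore $[\,\overline{\xi},\infty)\supseteq[0,\infty)$. I would include this observation after the elementary case analysis to make the connection to the unconstrained result explicit and to emphasise that the only new ingredient is the refined domain $\overline{\mathcal{K}}$ induced by the input-constraint compatibility in Theorem~\ref{Scaling_Universal_Formula}.

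I do not anticipate any serious obstacle; the argument is really a bookkeeping exercise on top of Theorem~\ref{Gain_Margin_Theorem}. The only point that deserves care is to verify that the identity $\mathbf{d}(\mathbf{x})\mathbf{k}_{\mathrm{Tun-BI}}(\mathbf{x})=\kappa(\mathbf{x})\Gamma_{\mathrm{Stg}}(\mathbf{x})-c(\mathbf{x})$ on $\{\mathbf{d}(\mathbf{x})\neq\mathbf{0}\}$ still holds for $\kappa\in\overline{\mathcal{K}}$; this follows because the $\operatorname{ReLU}(\cdot)$ in~\eqref{PMN_Tunable_Function} is inactive as soon as $\kappa(\mathbf{x})\geq \max\{c(\mathbf{x})/\Gamma_{\mathrm{Stg}}(\mathbf{x}),0\}$, which is exactly the lower-bound condition baked into $\overline{\mathcal{K}}$. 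Once that is noted, the proof collapses to the two-line case distinction above.
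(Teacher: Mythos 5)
Your proposal is correct and, via the shortcut in your second-to-last paragraph, it coincides with the paper's own proof, which simply notes $\overline{\mathcal{K}}\subseteq\mathcal{K}$ so that Theorem~\ref{Gain_Margin_Theorem} applies and the bound on $\mathcal{M}(\mathbf{x})$ in~\eqref{Bound_Function} yields the margin $[0,\infty)$. Your supplementary direct case analysis on the sign of $c(\mathbf{x})$ is sound and in fact slightly more careful than the paper, which asserts $\overline{\xi}=0$ exactly, whereas only $\overline{\xi}\leq 0$ is guaranteed (and suffices) for the claimed margin.
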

\begin{proof}
As $\widetilde{\mathcal{K}}_{\mathrm{SA-BI}}\subseteq\mathcal{K}_{\mathrm{SA}}$, Theorem~\ref{Gain_Margin_Theorem} remains applicable to~\eqref{QP_CLF_Solution_with_tuning}. This implies that the safety margin for~\eqref{QP_CLF_Solution_with_tuning} is within $\xi\in[\bar{\xi},\infty)$, $\bar{\xi}:=\sup_{\mathbf{x}\in\mathbb{R}^{n}}\mathcal{M}(\mathbf{x})$. Given that $\widetilde{\kappa}(\mathbf{x})\in\widetilde{\mathcal{K}}_{\mathrm{SA-BI}}$, we can deduce that $\bar{\xi}= 0$ based on~\eqref{Bound_Function}, which indicates that the safety margin of $\widetilde{\mathbf{k}}_{\mathrm{Tun-BI}}(\mathbf{x})$ is $[0,\infty)$.
\end{proof}
\subsubsection{Discussions}\label{Norm_Discussions}
As demonstrated in Theorem~\ref{Scaling_Universal_Formula}, one can derive a tunable universal formula~\eqref{QP_CLF_Solution_with_tuning} (with a proper $\widetilde{\kappa}(\mathbf{x})$) to address the safety-critical control problem with a norm-bounded input constraint. This essentially involves leveraging the safety margin defined in Definition~\ref{Stability_Margin} to accommodate input constraints. In contrast to Theorem~\ref{Safety_Guarantee_Thm}, the only difference is the range of the tunable term. 
Given that the input constraint is accounted for by the tunable universal formula~\eqref{QP_CLF_Solution_with_tuning}, the range of the tunable term becomes more limited. Moreover, Theorem~\ref{Smoothness} shows that the design of a smooth tunable universal formula remains unaffected, compared to Theorem~\ref{Property_for_U_with_Kappa}, as long as it stays within the range of the tunable term Finally, Theorem~\ref{Gain_Margin_Theorem_Input} demonstrates that, with an adjustment to the validity range of the tunable term (noting the reduced safety margin in~\eqref{QP_CLF_Solution_with_tuning} compared to~\eqref{Tunable_Universal_Formula}), we can still regulate the controller's performance in terms of safety, smoothness, and robustness, while satisfying a norm-bounded input constraint.
\section{Application Studies}\label{Simulation}
In this section, we demonstrate the advantages of the proposed tunable universal formula for safety-critical control, balancing safety, smoothness, and robustness, using a two-link robot manipulator~\cite{soltanpour2009robust}. We further show that its inherent robustness enables the controller to address input constraints. 

\begin{figure}[tp]\label{Comparison}
 \centering
    \makebox[0pt]{%
    \includegraphics[width=0.775\columnwidth]{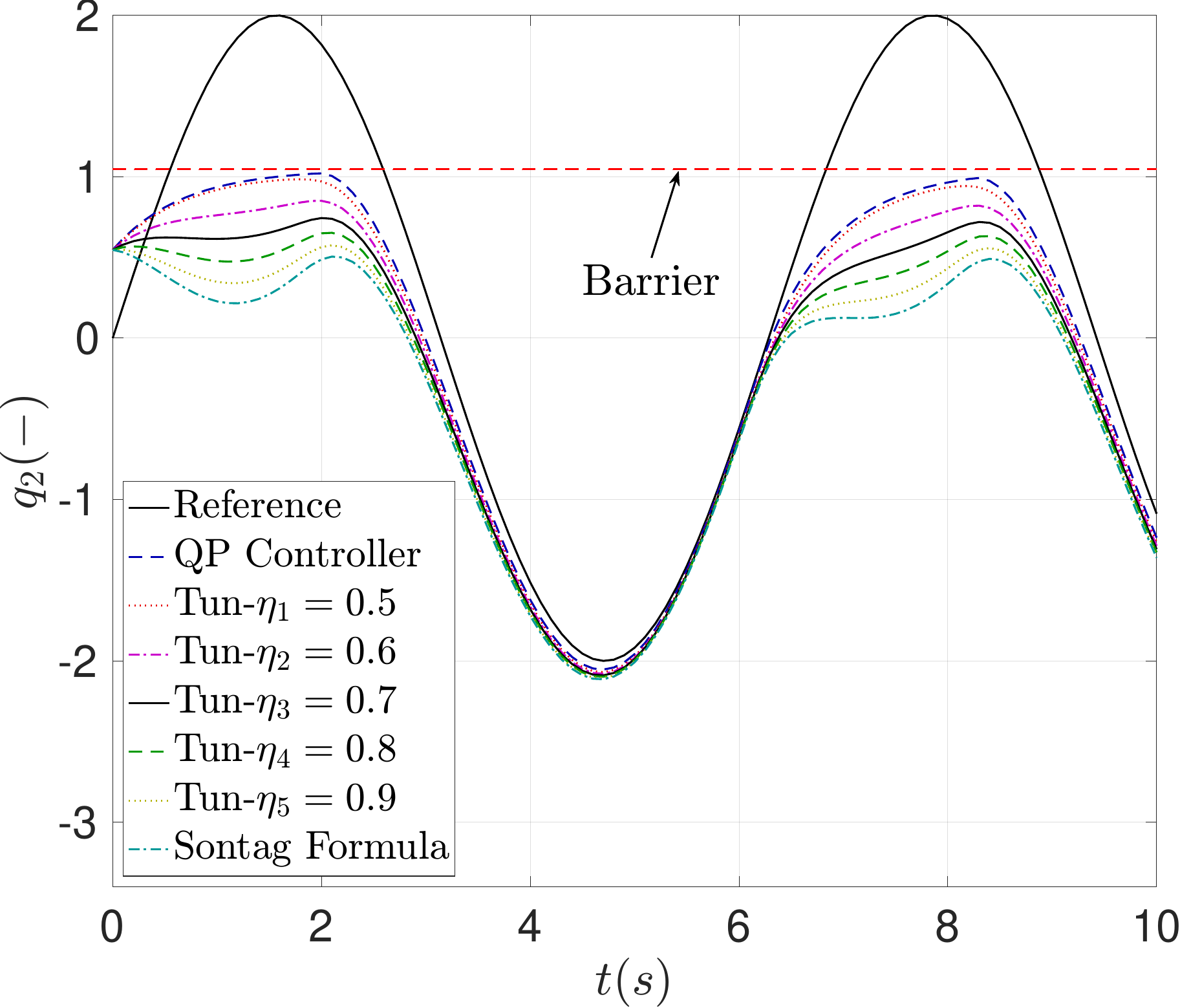}}
    \caption{The trajectories of the system states under the virtual controller $\mathbf{v} = \mathbf{k}_{0}$ are shown, which utilize tunable universal formulas with different tunable terms, with a comparison to both a QP-synthesized controller and Sontag's formula.}
    \label{Trajectory}
\end{figure}
\begin{figure}[tp]\label{Comparison}
 \centering
    \makebox[0pt]{%
    \includegraphics[width=0.775\columnwidth]{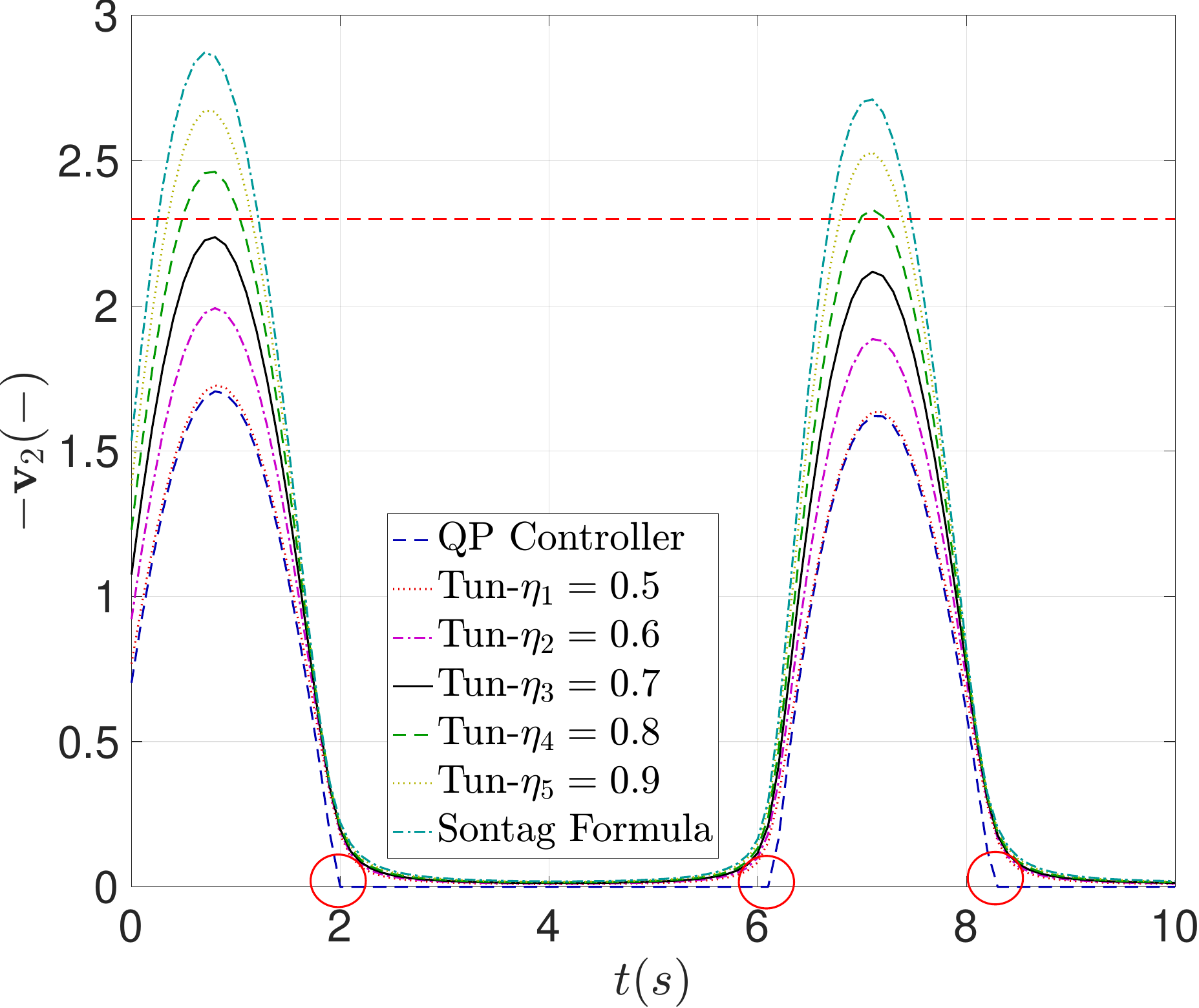}}
    \caption{The control input (with sign inversion), $-\mathbf{v}_{2}$, generated by the virtual controller $\mathbf{v} = \mathbf{k}_{0}$ using tunable universal formulas with various tunable terms, is compared to both the QP-synthesized controller and Sontag's formula.}
    \label{Input_Behavior}
\end{figure}
The two-link manipulator is modeled as follows:
\begin{equation}\label{Segway_model}
   \mathbf{M}(\mathbf{q})\ddot{\mathbf{q}}+\mathbf{C}(\mathbf{q},\dot{\mathbf{q}})\dot{\mathbf{q}}+\mathbf{N}(\mathbf{q})=\mathbf{u}, 
\end{equation}
where $\mathbf{q}=[q_{1},q_{2}]^{\top}\in\mathbb{R}^{2}$ describes the joint angle, the details for matrices $\mathbf{M}(\mathbf{q}), \mathbf{C}(\mathbf{q},\dot{\mathbf{q}}), \mathbf{N}(\mathbf{q})$ are provided in~\cite{soltanpour2009robust}, and we assume that $\mathbf{M}(\mathbf{q})$ is invertible. Consequently, we reformulate~\eqref{Segway_model} as:
\begin{equation}\label{Segway_reformulation}
\begin{split}
    \dot{\mathbf{q}}=\mathbf{v}, \quad
    \dot{\mathbf{v}}=\Phi(\mathbf{q},\dot{\mathbf{q}})+\mathbf{H}(\mathbf{q})\mathbf{u},
\end{split}
\end{equation}
where $\Phi(\mathbf{q},\dot{\mathbf{q}})=-\mathbf{M}(\mathbf{q})^{-1}\left(\mathbf{C}(\mathbf{q},\dot{\mathbf{q}})\dot{\mathbf{q}}+\mathbf{N}(\mathbf{q})\right)$, $\mathbf{H}(\mathbf{q})=\mathbf{M}(\mathbf{q})^{-1}$, and we define $\mathbf{x}:=[\mathbf{q}^{\top},\mathbf{v}^{\top}]^{\top}$. The objective is to control the two-link manipulator to track the desired trajectory $\mathbf{q}_{\mathrm{d}}=[2\sin(t)+1,2\sin(t)]^{\top}$ and to ensure that the manipulator remains within the safe region defined by $\mathcal{C}=\{\mathbf{x} \in \mathbb{R}^4: h(\mathbf{x})= \bar{q}_{2}-q_{2}\geq 0\}$ with $\bar{q}_{2}=\frac{\pi}{3}$. 

Given that~\eqref{Segway_reformulation} is a second-order system, one can apply the safe backstepping algorithms from~\cite{taylor2022safe} to design a safe controller. This approach involves first designing a virtual safety controller \(\mathbf{v}=\mathbf{k}_0(\mathbf{x})\) for the system \(\dot{\mathbf{q}} = \mathbf{v}\) by assuming that we could direct control over \(\mathbf{v}\).  Then, a barrier function $\mathfrak{b}(\mathbf{q},\mathbf{v})=h(\mathbf{x})-\frac{1}{2\mu}(\mathbf{v}-\mathbf{k}_0(\mathbf{x}))^{\top}(\mathbf{v}-\mathbf{k}_0(\mathbf{x}))$ is constructed to derive the control input \(\mathbf{u}\) that ensures the safety of the system~\eqref{Segway_reformulation}. However, as noted in~\cite{taylor2022safe}, the safe backstepping algorithm requires \(\mathbf{k}_0(\mathbf{x})\) to be smooth. This requirement limits the direct use of the QP-synthesized controller for designing \(\mathbf{k}_0(\mathbf{x})\) due to its lack of smoothness at certain points (see Section~\ref{Smoothness_exist}). Alternatively, one can follow the approach discussed in~\cite{molnar2021model}. In particular, once $\mathbf{k}_{0}(\mathbf{x})$ is designed, a PD controller can be further designed to track the safe trajectory. However, this approach still requires $\mathbf{k}_{0}(\mathbf{x})$ to be smooth~\cite{cohen2023characterizing}, which prevents the use of the QP-synthesized controller.

In this paper, we use the safe backstepping algorithms from~\cite{taylor2022safe} to address safety-critical control, where we start with the design of the virtual controller \(\mathbf{v}=\mathbf{k}_0(\mathbf{x})\) using QP-synthesized controller, tunable universal formulas, and Sontag's formula. In particular, we first introduce a nominal stabilizing control input $\mathbf{k}_{0,\mathrm{d}}(\mathbf{x})=-\mathbf{K}_{\rm{P}}\mathbf{e}+\dot{\mathbf{q}}_{\rm{d}}$ to achieve the tracking task, where $\mathbf{e}=\mathbf{q}-\mathbf{q}_{\rm{d}}$, $\mathbf{K}_{\rm{P}}=\rm{diag}(1,1)$.
Next, leveraging the results given in Remark~\ref{safe_filter_tun}, we design a QP-synthesized controller, tunable universal formulas, and Sontag's formula under the safety filter framework. Specifically, we compute $c(\mathbf{x})= 1.5h(\mathbf{x})$, $\mathbf{d}(\mathbf{x})=[0,-1]$, and $\bar{c}(\mathbf{x})=c(\mathbf{x})+\mathbf{d}(\mathbf{x})\mathbf{k}_{0,\mathrm{d}}(\mathbf{x})$ (cf. Remark~\ref{safe_filter_tun}). Next, according to Theorem~\ref{Property_for_U_with_Kappa}, we set $\eta=0.5, 0.6, 0.7, 0.8, 0.9$ and obtain $\kappa_{1}(\mathbf{x}), \kappa_{2}(\mathbf{x}), \kappa_{3}(\mathbf{x}), \kappa_{4}(\mathbf{x}), \kappa_{5}(\mathbf{x})\in\mathcal{K}_{\mathrm{SM}}$ according to $\kappa(\mathbf{x})=(1-\eta)\cdot\frac{\bar{c}(\mathbf{x})}{\bar{\Gamma}_{\mathrm{Stg}}(\mathbf{x})}+\eta$, where $\bar{\Gamma}_{\mathrm{Stg}}(\mathbf{x})$ is defined by~\eqref{Gamma_Choice} with $\bar{c}(\mathbf{x})$, and $s(\|\mathbf{d}(\mathbf{x}) \|^{2})=\sigma\|\mathbf{d}(\mathbf{x}) \|^{2}$ with $\sigma=0.2$.
By using these selections of $\kappa(\mathbf{x})$ and employing~\eqref{SF_Tunable Universal_Formula}, several tunable universal formulas are derived.

\begin{figure}[tp]\label{Comparison}
 \centering
    \makebox[0pt]{%
    \includegraphics[width=0.775\columnwidth]{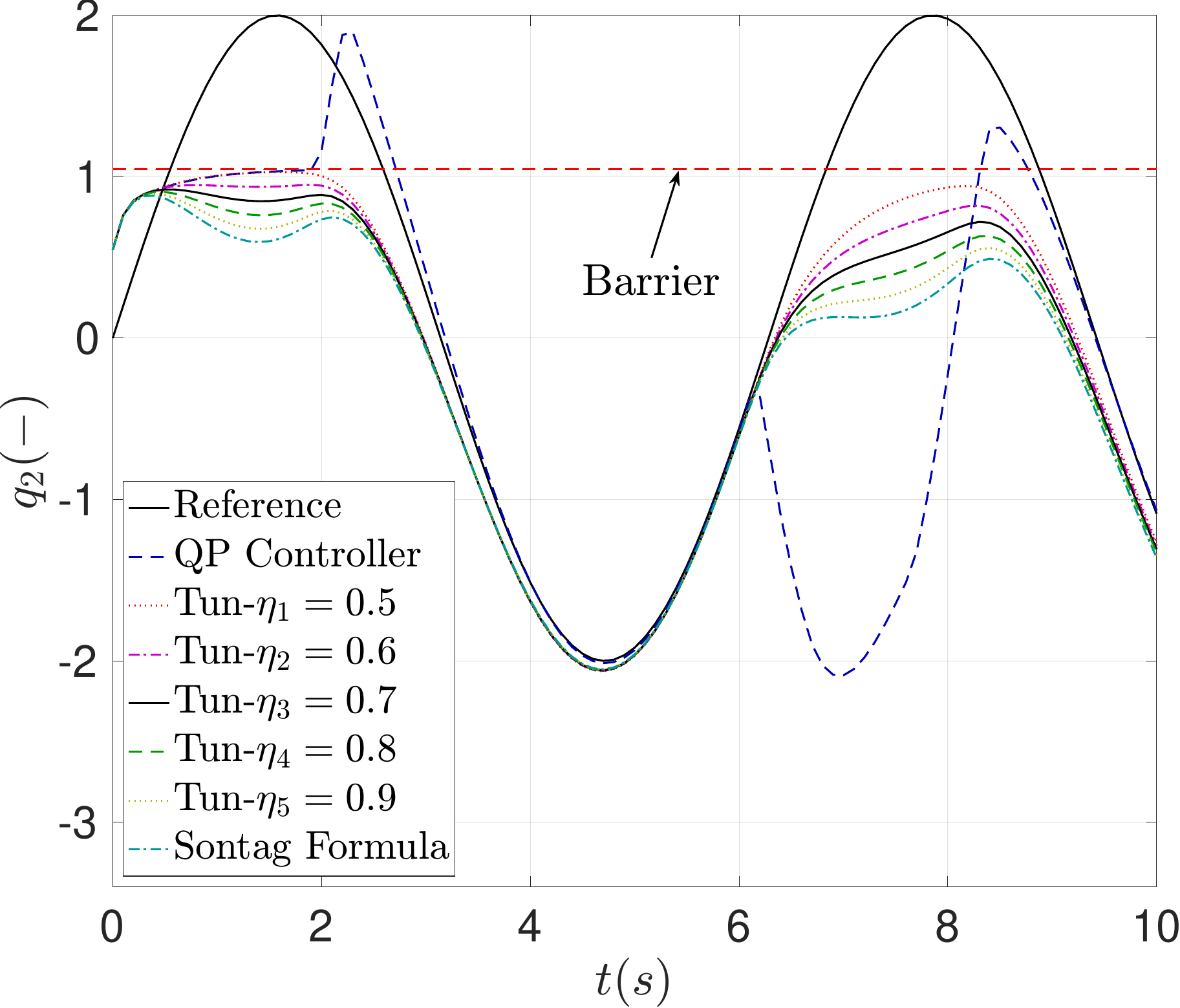}}
    \caption{Safe backstepping control using tunable universal formulas with different tunable terms, a QP-synthesized controller, and Sontag’s formula as virtual controllers.}
    \label{Safe_Backstepping}
\end{figure}

As shown in Fig.~\ref{Trajectory}, the safety-critical control behavior is depicted for different controllers resulting from various
$\kappa(\mathbf{x})$, each offering different degrees of robustness (or conservatism) while ensuring safety guarantees. Note that Sontag's formula exhibits conservative behavior in safety-critical control. Moreover, according to Proposition~\ref{Perfect_Approximation} (with $\eta = 0.5$ and $\sigma = 0.2$), it shows that the tunable universal formula closely approximates the QP-synthesized controller. In Fig.~\ref{Input_Behavior}, the control inputs $-\mathbf{v}_{2} =  \mathbf{k}_{\mathrm{d}}(\mathbf{x})-\mathbf{k}_{\mathrm{SF}}(\mathbf{x})$  from our tunable universal formulas (with a sign inversion), are smooth. This smoothness follows from Theorem~\ref{Property_for_U_with_Kappa}, due to the smooth functions $\eta_{1}(\mathbf{x}), \eta_{2}(\mathbf{x}), \eta_{3}(\mathbf{x}), \eta_{4}(\mathbf{x})$, $\eta_{5}(\mathbf{x})\in\Xi$. However, as indicated by the red circles in Fig.~\ref{Input_Behavior}, the QP-synthesized controller exhibits non-smooth behavior at certain points, which is consistent with our expectations. Once the virtual controller $\mathbf{v} = \mathbf{k}_0(\mathbf{x})$ is obtained, we construct a new barrier function $\mathfrak{b}(\mathbf{q},\mathbf{v}) = h(\mathbf{x}) - \frac{1}{2\mu}\left(\mathbf{v} - \mathbf{k}_0(\mathbf{x})\right)^\top\left(\mathbf{v} - \mathbf{k}_0(\mathbf{x})\right)$ with $\mu=20$. Following the approach given in~\cite{taylor2022safe}, we use a QP-synthesized safety filter (with CBF $\mathfrak{b}(\mathbf{q}, \mathbf{v})$) to design a control input $\mathbf{u}$ that ensures the safety of the system~\eqref{Segway_reformulation}. Note that the safety filter requires a nominal controller, and we define $\mathbf{k}_{\mathrm{d}}(\mathbf{x}) = \mathbf{H}^{-1}(-\Phi+\dot{\mathbf{k}}_{0}-\mathbf{K}_{\bar{\mathrm{P}}}(\mathbf{v}-\mathbf{k}_{0}))$ with $\mathbf{K}_{\bar{\mathrm{P}}}=\mathrm{diag}(1,1)$. As illustrated in Fig.~\ref{Safe_Backstepping}, the QP-synthesized controller fails to achieve safe tracking due to its non-smooth behavior.

Additionally, we study the effectiveness of tunable universal formulas in addressing safety-critical control when a norm-bounded control input constraint is considered. Specifically, we consider the virtual input $\mathbf{v}_{2}$ to be constrained by a norm-bounded constraint, such that $\|\mathbf{v}_{2}\| \leq \gamma$, where $\gamma = 2.3$ (corresponding to the red dashed line in Fig.~\ref{Input_Behavior}). It is observed in Fig.~\ref{Input_Behavior} that $\eta=0.8,0.9$ and Sontag's formula are not valid choices anymore since their controls exceed the norm bound $\gamma=2.3$. In this case, we should switch to the tunable universal formula provided in~\eqref{QP_CLF_Solution_with_tuning} (note that it also has to be adapted to a safety filter framework). 
Specifically, based on Theorem~\ref{Smoothness}, we find that only the controllers with $\eta = 0.5$, $0.6$, and $0.7$ (as shown in Fig.~\ref{Input_Behavior}) satisfy the safety and smoothness properties established in Theorem~\ref{Scaling_Universal_Formula} and Theorem~\ref{Smoothness}. This finding is consistent with the conclusions discussed in Section~\ref{Norm_Discussions}.

\section{Conclusions}\label{Conclusions}
This paper provides a novel framework for deriving a tunable universal formula for safety-critical control. By introducing a tunable term into Sontag's formula, we establish necessary and sufficient conditions for balancing safety, smoothness, and robustness in a tunable universal formula. In contrast to the approach in~\cite{cohen2023characterizing}, our approach is simpler, requiring only that the tunable term meets specific conditions and validity ranges, which enhances interpretability by clearly illustrating its effects on safety and robustness. Moreover, we emphasize that this tunable universal formula provides additional benefits for safety-critical tasks with norm-bounded input constraints; a straightforward adjustment to the tunable term's validity range enables the controller to regulate performance in safety, smoothness, and robustness while complying with these constraints. The effectiveness and advantages of the tunable universal formula are demonstrated through a two-link manipulator application, which highlights its capability to regulate desirable properties.

\bibliographystyle{IEEEtran}
\bibliography{Safety_Unify.bib}
\end{document}